\newtheorem{assumption}{\bf Assumption}
\newtheorem{definition}{\bf Definition}
\newtheorem{theorem}{\bf Theorem}
\newtheorem{proposition}{\bf Proposition}
\newtheorem{lemma}{\bf Lemma}
\def\JV{\textcolor{black}}
\begin{document}

\title{Beyond Asymptotics: Targeted exploration with finite-sample guarantees}

\author{Janani Venkatasubramanian$^1$, Johannes K\"ohler$^2$ and Frank Allg\"ower$^1$
\thanks{$^1$ Janani Venkatasubramanian and Frank Allg\"ower are with the Institute for Systems Theory and Automatic Control, University of Stuttgart, 70550 Stuttgart, Germany. (email:\{janani.venkatasubramanian, frank.allgower\}@ist.uni-stuttgart.de).}
\thanks{$^2$ Johannes K\"ohler is with the Institute for Dynamic Systems and Control, ETH Z\"urich, Z\"urich CH-80092, Switzerland. (email:jkoehle@ethz.ch)}
\thanks{F. Allg\"ower is thankful that his work was funded by Deutsche Forschungsgemeinschaft (DFG, German Research Foundation) under Germany’s Excellence Strategy - EXC 2075 - 390740016 and under grant 468094890. F. Allg\"ower acknowledges the support by the Stuttgart Center for Simulation Science (SimTech). J. Venkatasubramanian thanks the International Max Planck Research School for Intelligent Systems (IMPRS-IS) for supporting her.}}
%% The paper headers
%\markboth{Header}%
%{Shell \MakeLowercase{\textit{et al.}}: Bare Demo of IEEEtran.cls for IEEE Journals}

\maketitle

\begin{abstract}
% In this  paper, we introduce a targeted exploration strategy for the non-asymptotic, finite-time case. 
% The proposed methodology is applicable to uncertain linear systems subject to sub-Gaussian disturbances. 
% As the main contribution, the proposed approach provides a priori guarantees, ensuring that a optimized exploration inputs achieve a desired accuracy in the model parameters. 
% The technical derivation using self-normalized martingales for non-asymptotic identification error, spectral lines for predicting the effect of sinusoidal excitation, and appropriately addresses the spectral transient error and uncertainty in the model parameters. 
%-
% In this paper, we introduce a novel finite-time targeted exploration strategy designed specifically for linear systems with stochastic disturbances. The designed targeted exploration strategy guarantees a desired accuracy of the parameter estimate in finite-time. We invoke prior results on the bound on the self-normalized martingale in regularized least squares analysis that yields parameter uncertainty bounds. We provide sufficient conditions on the exploration data and its spectral content that ensures a desired bound on the estimated parameter. By deriving bounds on the spectral transient error, and accounting for initial parametric uncertainty, we derive a semidefinite program to compute the optimal sinusoidal exploration input. 
% Finally, we highlight the applicability of the exploration strategy through a numerical example.
In this paper, we introduce a targeted exploration strategy for the non-asymptotic, finite-time case. The proposed strategy is applicable to uncertain linear time-invariant systems subject to sub-Gaussian disturbances. As the main result, the proposed approach provides a priori guarantees, ensuring that the optimized exploration inputs achieve a desired accuracy of the model parameters. The technical derivation of the strategy (i) leverages existing non-asymptotic identification bounds with self-normalized martingales, (ii) utilizes spectral lines to predict the effect of sinusoidal excitation, and (iii) effectively accounts for spectral transient error and parametric uncertainty. A numerical example illustrates how the finite exploration time influences the required exploration energy. 

% Finally, the applicability of the exploration strategy is demonstrated through a numerical example.
\end{abstract}

% \begin{IEEEkeywords}Uncertain systems, Statistical learning, Data-driven control
% \end{IEEEkeywords}

\section{Introduction}
System identification bridges control theory and machine learning by providing methods to model dynamical systems from data, enabling better prediction and control \cite{ljung1999sysid}. Within this field, optimal experiment design and targeted exploration~\cite{bombois2011optimal} develop methods to optimally excite dynamical systems to reduce model uncertainty, thereby (i) attaining a model with desired accuracy \cite{jansson2005input, bombois2021robust}, or (ii) ensuring the feasibility of a subsequent robust control design~\cite{barenthin2008identification, umenberger2019robust, ferizbegovic2019learning, venkatasubramanian2024robust, venkatasubramanian2023sequential}. Theoretical results on optimal experiment design rely primarily on asymptotic bounds for identification error, which are valid in the limit as the amount of data approaches infinity~\cite{ljung1999sysid}. However, in practice we only have finite data from experiments and the reliability of the model-based controller depends significantly on the quality of the data used for system identification. Recently, numerous results have been developed for non-asymptotic (finite-sample) analysis of system identification \cite{simchowitz2018learning, sarkar2019near, tsiamis2019finite} and subsequent control \cite{abbasi2011improved, dean2017sample, mania2019certainty}. An overview of these methods is provided in \cite{tsiamis2023statistical}.
%Analogous to \cite{ljung1999sysid}, finite-sample data-based high probability credibility regions for parameter estimates are introduced in \cite{abbasi2011improved, dean2017sample}. However, the uncertainty bounds on the model parameters in \cite{dean2017sample} require data to be independent, and hence not applicable to correlated time-series data. 
In general, these non-asymptotic results depend on excitation from applying random inputs to ensure learning, and do not optimize the inputs for excitation. Recent works \cite{wagenmaker2020active, sarker2020parameter} consider targeted exploration with periodic/sinusoidal inputs, adopting a frequency-domain analysis for parameter estimation.
% The method in \cite{wagenmaker2020active} focuses on learning the system assuming known control matrix, and under specific structural assumptions.
% Furthermore, the methods in \cite{wagenmaker2020active, sarker2020parameter}
However, neither approach explicitly accounts for transient effects~\cite{ljung1999sysid} during input design or analysis. Hence, their guarantees are contingent on the dissipation of transient error and the attainment of steady-state response. This leads to prolonged experiment durations, which may be impractical for many real-world applications.
 % of the input, which may be due to windowing, spectral leakage, or initial conditions
 
In this paper, we design a targeted exploration strategy that ensures a desired error bound on the estimated parameters with high probability by utilizing a non-asymptotic data-dependent uncertainty bound \cite{abbasi2011improved, sarkar2019near}. In particular, we derive \textit{a priori} guarantees on the error bound that can be ensured before exploration. To shape and reduce uncertainty in a targeted manner, we consider multi-sine exploration inputs in selected frequencies and optimized amplitudes. As our main contribution, we derive sufficient conditions directly in terms of the exploration inputs that ensure a desired parameter error bound with high probability through finite-time exploration. In particular, we explicitly account for transient error due to the input and the effect of the disturbances in the design of targeted exploration. We use these sufficient conditions to formulate linear matrix inequalities (LMIs) for exploration, ensuring the desired parameter error bound. This leads to a targeted exploration design with minimal input energy based on a semidefinite program (SDP).

\textit{Outline: }We first provide the setting and define the exploration objective in Section~\ref{sec:problemstatement}. We then summarize data-dependent non-asymptotic identification bounds from \cite{abbasi2011improved, sarkar2019near} in Section \ref{sec:nonasymp_ub}. In Section \ref{sec:suffcond_data}, we derive sufficient conditions on the exploration data. In Section \ref{sec:suffcond_spectral} we derive sufficient conditions on the exploration inputs by leveraging the theory of spectral lines \cite{sarker2020parameter} and explicitly accounting for transient error due to the input in the exploration design. Finally, in Section \ref{sec:numerical}, we provide a numerical example to highlight the effect of the finite time of the experiment on the required input energy.

\textit{Notation: }The transpose (conjugate transpose) of a matrix $A \in \mathbb{R}^{n \times m} (\mathbb{C}^{n \times m})$ is denoted by $A^\top$ ($A^\mathsf{H}$).
% The conjugate transpose of a matrix $A \in \mathbb{C}^{n \times m}$ is denoted by $A^\mathsf{H}$.
The positive semi-definiteness (definiteness) of a matrix $A \in \mathbb{C}^{n \times n}$ is denoted by $A = A^\mathsf{H} \succeq (\succ) 0$. The Kronecker product operator is denoted by $\otimes$. The operator $\mathrm{vec}(A)$ stacks the columns of $A$ to form a vector. The operator $\textnormal{diag}(A_1,\dots,A_n)$ creates a block diagonal matrix by aligning the matrices $A_1,\dots,A_n$ along the diagonal. The Euclidean norm for a vector $x\in\mathbb{R}^n$ is denoted by $\|x\|=\sqrt{x^\top x}$. Given a matrix $P \in \mathbb{R}^{n \times n},\, P \succ 0$, the weighted Euclidean norm is given by $\|x\|_P=\sqrt{x^\top P x}$.  The largest singular value of a matrix $A\in\mathbb{C}^{m\times n}$ is denoted by $\|A\|$. \JV{Given a matrix $M \in \mathbb{C}^{m \times m},\, M \succeq 0$, the weighted Frobenius norm of a matrix $A\in\mathbb{C}^{m\times n}$ is denoted by $\|A\|_{F,M}=\sqrt{\mathrm{tr}\left(A^\mathsf{H} M A\right)}$.} $\mathbb{E}[X]$ denotes the expected value of a random variable $X$. $\mathbb{P}[E]$ denotes the probability of an event $E$ occurring.

% Nevertheless, these bounds have been utilized in the design of targeted exploration in \cite{ferizbegovic2019learning}. 

%This result stems from the `coarse-ID control' procedure in which a model and its associated uncertainty is estimated from a few experimental trials, and is then utilized for controller design.

% \textbf{JK}Relevant works:
% \begin{itemize}
% \item general optimal experiment design/targete exploration; and asymptotic limitation
%     \item Works using the same identification results;
%      \item work advocating random inputs + these identification results
%      \item works advocating sinusoid + these results; e.g., \\
%      Accurate parameter estimation for safety-critical systems with unmodeled dynamics\\
%      Note that the identification part should be essentially equivalent; both use a fixed number of frequencies, and utilize a simple least squares (without reguarlization)!
%      \item ToDO: double check if identification results require some kind of independence between $u$
% \end{itemize}

\section{Problem Statement}\label{sec:problemstatement}
Consider a discrete-time linear time-invariant dynamical system of the form \begin{equation}\label{eq:sys}
x_{k+1}=A_{\mathrm{tr}}x_k+B_{\mathrm{tr}}u_k+w_k,
\end{equation}
where time $k \in \mathbb{N}$, $x_k \in \mathbb{R}^{n_\mathrm{x}}$ is the state, $u_k \in \mathbb{R}^{n_\mathrm{u}}$ is the control input, and $w_k \in \mathbb{R}^{n_\mathrm{x}}$ is the disturbance. It is assumed the state $x_k$ is directly measurable. In order to simplify the exposition, we assume that the initial state $x_0$ is zero. The true system parameters $A_\mathrm{tr}$, $B_\mathrm{tr}$ are initially uncertain, and it is necessary to collect informative data through the process of exploration to enhance the accuracy of the parameters. 

Henceforth, we denote $\phi_k=[x_k^\top\,, u_k^\top]^\top \in \mathbb{R}^{n_\phi}$ where $n_\phi=n_\mathrm{x}+n_\mathrm{u}$. The system \eqref{eq:sys} can be re-written in terms of the unknown parameter $\theta_\mathrm{tr}=\textnormal{vec}([A_\mathrm{tr},B_\mathrm{tr}]) \in \mathbb{R}^{n_\theta}$ as
 \begin{align}\label{eq:systheta}
 x_{k+1}=(\phi_k^\top \otimes I_{n_\mathrm{x}})\theta_\mathrm{tr} + w_k
 \end{align}
 where $n_\theta = n_\mathrm{x}n_\phi$. We consider a filtration of $\sigma$-algebras $\{\mathcal{F}_k \}_{k \geq 0}$ such that $\phi_k$ is $\mathcal{F}_{k-1}$-measurable and $w_k$ is $\mathcal{F}_k$-measurable.
\begin{definition}(Sub-Gaussian random vector \cite[Definition 2]{ao2025stochastic})
    A random vector $w_k \in \mathbb{R}^{n_\mathrm{x}}$ with mean $\mathbb{E}[w_k]=\mu$ is called sub-Gaussian with proxy variance $\Sigma \succeq 0$, i.e., $w_k \sim \mathsf{subG}(\mu,\Sigma)$, if $\forall \lambda \in \mathbb{R}^{n_\mathrm{x}}$,
\begin{align*}
    \mathbb{E}\left[ \exp (\lambda^\top (w_k-\mu))|\mathcal{F}_{k-1}\right] \leq \exp \left( \frac{\|\lambda\|_{\Sigma}^2}{2} \right).
\end{align*}
\end{definition}
\begin{assumption}\label{a1}
    The disturbance $w_k$ is conditionally sub-Gaussian with zero mean and known proxy variance $\sigma_\mathrm{w}^2I_{n_\mathrm{x}}$ \cite{abbasi2011improved}, i.e., $w_k \sim \mathsf{subG}(0,\sigma_\mathrm{w}^2I_{n_\mathrm{x}})$. 
\end{assumption}

The sub-Gaussian distributions encompass Gaussian, uniform, and other distributions with light tails. We assume that some prior knowledge on the dynamics is available.
\begin{assumption}\label{a3}
    The system matrix $A_\mathrm{tr}$ is Schur stable and the pair $(A_\mathrm{tr},B_\mathrm{tr})$ is controllable.
\end{assumption}
% \begin{assumption}\label{a3}
%     The system matrix $A_\mathrm{tr}$ is strictly stable, i.e., $\|A_\mathrm{tr}\| < 1$.
% \end{assumption}
% \begin{assumption} \label{a1} The parameters $\theta=\mathrm{vec}([A,B])$ have a Gaussian prior, i.e., $\theta \sim  \mathcal{N}(\hat{\theta}_{\mathrm{prior}},P_{\theta, \mathrm{prior}})$. Furthermore, there exists a matrix $\tilde{D}_0 \succ 0$ such that $P_{\theta, \mathrm{prior}}^{-1}=\tilde{D}_0 \otimes I_{n_\mathrm{x}}$.
% \end{assumption} 
\begin{assumption} \label{a2} The parameters $\theta_\mathrm{tr}=\mathrm{vec}([A_\mathrm{tr},B_\mathrm{tr}]) \in \mathbb{R}^{n_\theta}$ lie in a known set $\mathbf{\Theta}_0$, i.e., $\theta_\mathrm{tr} \in \mathbf{\Theta}_0$, where 
\begin{align}\label{eq:Theta0}
\mathbf{\Theta}_0:=\left\{\theta:(\hat{\theta}_0-\theta)^\top (D_0 \otimes I_{n_\mathrm{x}}) (\hat{\theta}_0-\theta) \leq 1 \right\},
\end{align}
with an estimate $\hat{\theta}_0$ and for some $D_0 \succ 0$.
\end{assumption}

\textit{Exploration goal:} Since the true system parameters $\theta_\mathrm{tr}~=~\text{vec}([A_\mathrm{tr}$, $B_\mathrm{tr}])$ are not precisely known, there is a necessity to gather informative data through the process of exploration. Our primary goal is to design exploration inputs that excite the system, for a fixed user chosen time $T \in \mathbb{N}$, in a manner as to obtain an estimate $\hat{\theta}_T=\text{vec}([\hat{A}_T,\hat{B}_T])$ that satisfies 
\begin{align}\label{eq:exp_goal2}
    (\theta_\mathrm{tr}-\hat{\theta}_T)^\top \left( D_\mathrm{des} \otimes I_\mathrm{n_x} \right)(\theta_\mathrm{tr}-\hat{\theta}_T)\leq 1.
\end{align}  
Here, $D_\mathrm{des} \succ 0$ is a user-defined matrix characterizing how closely $\hat{\theta}_T$ approximates the true parameters $\theta_\mathrm{tr}$. The exploration inputs are computed such that it excites the system sufficiently with minimal input energy, based on the initial parameter estimates (cf. Assumption \ref{a2}). Denote $U=[u_0^\top,\cdots,u_{T-1}^\top]^\top \in \mathbb{R}^{T n_\mathrm{u}}$. Bounding the input energy by a constant $T \gamma_\mathrm{e}^2\geq 0$ can be equivalently written as
\begin{align}\label{eq:gammau}
    \sum_{k=0}^{T-1}\|u_k\|^2 = \|U\|^2\leq T \gamma_\mathrm{e}^2.
\end{align}

% \begin{remark}\label{rem:2} We require $A_\mathrm{tr}$ to be Schur stable since we consider only open-loop inputs in our exploration strategy. Assumption \ref{a3} can be relaxed if an exploration input of the form in \eqref{eq:exploration_controller} with an additional linear feedback, i.e., $v_k=u_k+Kx_k$, is utilized which ensures robust stability for all $\theta \in \mathbf{\Theta}_0$ (cf. Assumption \ref{a1}).
% \end{remark}

% From the ordinary least squares formulation, we obtain the expressions for the mean $\hat{\theta}_T=\textnormal{vec}([\hat{A}_T,\hat{B}_T])$ as
% \begin{align}\label{eq:mean_est}
%     \hat{\theta}_T=\tilde{D}_T^{-1}\sum_{k=0}^{T-1} (\phi_k^\top \otimes I_{n_\mathrm{x}})^\top x_{k+1} = \tilde{D}_T^{-1} (\Phi\otimes I_{n_\mathrm{x}}) X
% \end{align}
% and covariance $\tilde{D}_T^{-1}$ as

\section{Preliminaries: Non-asymptotic uncertainty bound}\label{sec:nonasymp_ub}
In this section, we focus on quantifying the uncertainty associated with the unknown parameters $\theta_\mathrm{tr}=\textnormal{vec}([A_{\mathrm{tr}},B_{\mathrm{tr}}])$. For this purpose, we first outline regularized least squares estimation, and then summarize existing results by \cite{abbasi2011improved, sarkar2019near} that provide a data-dependent uncertainty bound on the uncertain parameters. 

Given observed data $\mathcal{D}_{T+1}=\{x_k,u_k\}_{k=0}^{T}$ of length $T+1$, we denote
\begin{align}\label{eq:phi_compact}
\nonumber
    \Phi & = [\phi_0,\dots,\phi_{T-1}] \in \mathbb{R}^{n_\phi\times T},\\
\nonumber
    X & = [x_1^\top,\dots, x_T^\top]^\top \in \mathbb{R}^{Tn_\mathrm{x} \times 1},\\
    W & = [w_0^\top,\dots, w_{T-1}^\top]^\top \in \mathbb{R}^{Tn_\mathrm{x}  \times 1}.
\end{align}
Given \eqref{eq:phi_compact}, the regularized least squares estimate is given by
\begin{align}\label{eq:reg_mean_est}
    \hat{\theta}_{T}=\underbrace{( (D_T \otimes I_{n_\mathrm{x}})+\lambda I_{n_\phi n_\mathrm{x}})^{-1}}_{=:\bar{D}_T^{-1}} (\Phi\otimes I_{n_\mathrm{x}}) X,
    % &= ( \tilde{D}_T+\lambda I_{n_\phi n_\mathrm{x}} )^{-1}\sum_{k=0}^{T-1} (\phi_k^\top \otimes I_{n_\mathrm{x}})^\top x_{k+1}\\
\end{align} 
with the regularization constant $\lambda > 0$ and the excitation
\begin{align}\label{eq:covar_est}
    D_T= \Phi \Phi^\top.
\end{align}
% Let us assume that $D_T \succ 0$ and well-conditioned, which can be ensured by choosing a persistently exciting (PE) input sequence $\{u_k\}_{k=0}^T$ if $(A_\mathrm{tr},B_\mathrm{tr})$ is controllable \cite{willems2005note}.
Using \eqref{eq:reg_mean_est}, the estimate $\hat{\theta}_{T}$ can be expressed as
\begin{align}\label{eq:mean_est_rls}
\nonumber
\hat{\theta}_{T} \overset{\eqref{eq:systheta}}{=}& \bar{D}_T^{-1} (\Phi\otimes I_{n_\mathrm{x}})((\Phi^\top \otimes I_{n_\mathrm{x}})\theta_\mathrm{tr} + W)\\
=&\bar{D}_T^{-1} ((D_T \otimes I_{n_\mathrm{x}}) \theta_\mathrm{tr} + (\Phi \otimes I_{n_\mathrm{x}})W).
\end{align}
Thus, the regularized least squares error satisfies
\begin{align}\label{eq:errtheta}
    \hat{\theta}_{T}-\theta_\mathrm{tr}=& -\lambda \bar{D}_T^{-1} \theta_\mathrm{tr}
     + \bar{D}_T^{-1} \underbrace{(\Phi\otimes I_{n_\mathrm{x}}) W}_{=:S_T}.
\end{align}
 \JV{Let $\tilde{S}_T=\sum_{k=0}^{T-1} w_k \phi_k^\top$ so that $S_T=\mathrm{vec}(\tilde{S}_T)$.}
 
\textit{Self-normalized Martingales:} \JV{The sequence $\{\tilde{S}_T\}_{ T \geq 0}$ is a martingale with respect to $\{\mathcal{F}_T \}_{T=0}^\infty$  \cite{williams1991probability}.} This sequence is crucial for the construction of confidence ellipsoids for $\theta_\mathrm{tr}$. In \cite[Theorem 1]{abbasi2011improved}, a `self-normalized bound' for the martingale $\{\tilde{S}_T \}_{T \geq 0}$ is derived by assuming scalar disturbances $w_k$, i.e., $n_\mathrm{x}=1$

% \textit{Self-normalized Martingales:} The sequence $\{S_T \}_{ T \geq 0}$ with $S_T=(\Phi\otimes I_{n_\mathrm{x}}) W$ is a martingale with respect to $\{\mathcal{F}_T \}_{T=0}^\infty$ (\eqref{eq:sys}, \cite{williams1991probability}). This sequence is crucial for the construction of confidence ellipsoids for $\theta_\mathrm{tr}$. In \cite[Theorem 1]{abbasi2011improved}, a `self-normalized bound' for the martingale $\{S_T \}_{T \geq 0}$ is derived by assuming scalar disturbances $w_k$, i.e., $n_\mathrm{x}=1$.

The following lemma generalizes the bound in \cite[Theorem 1]{abbasi2011improved} to vector-valued disturbances, i.e., $n_\mathrm{x} > 1$, by utilizing covering techniques as in \cite{tsiamis2023statistical}.
% \begin{lemma} \label{lem:self-norm-bound-vector} (Self-normalized bound for vector processes) Let Assumption \ref{a1} hold. For any $\delta \in (0,1)$, with probability at least $1-\delta$, for all $T \in \mathbb{N}$,
%     \begin{align}\label{eq:nonreg-vector}
% \|S_T\|_{\bar{D}_T^{-1}}^2 \leq \underbrace{8 \sigma_\mathrm{w}^2 \log \left(
% \frac{5^{n_\mathrm{x}} \det (\bar{D}_T)^{\frac{1}{2}}}{\delta \det(\lambda I_{n_\mathrm{x} n_\phi})^{\frac{1}{2}}} \right)}_{:=R(\bar{D}_T)}.
% \end{align}
% \end{lemma}
\JV{\begin{lemma} \label{lem:self-norm-bound-vector} (Self-normalized bound for vector processes \cite[Eq. (S6)]{tsiamis2023statistical}) Let Assumption \ref{a1} hold. For any $\delta \in (0,1)$, with probability at least $1-\delta$, for all $T \in \mathbb{N}$,
    \begin{align}\label{eq:nonreg-vector}
\|S_T\|_{\bar{D}_T^{-1}}^2 = \|\tilde{S}_T\|_{F,\bar{D}_T^{-1}}^2\leq \underbrace{2 n_\mathrm{x} \sigma_\mathrm{w}^2 \log \left(
\frac{n_\mathrm{x} \det (\bar{D}_T)^{\frac{1}{2}}}{\delta \det(\lambda I_{n_\mathrm{x} n_\phi})^{\frac{1}{2}}} \right)}_{:=R(\bar{D}_T)}.
\end{align}
\end{lemma}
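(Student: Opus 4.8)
The plan is to reduce the vector-valued self-normalized bound to the scalar bound of \cite[Theorem 1]{abbasi2011improved}, applied along a finite set of directions, by exploiting the Kronecker structure. The first step is to rewrite everything in matrix form. Since $I_{n_\phi n_\mathrm{x}}=I_{n_\phi}\otimes I_{n_\mathrm{x}}$, the normalizer factorizes as $\bar D_T=(D_T\otimes I_{n_\mathrm{x}})+\lambda I_{n_\phi n_\mathrm{x}}=\bar V_T\otimes I_{n_\mathrm{x}}$ with $\bar V_T:=D_T+\lambda I_{n_\phi}$, while $S_T=(\Phi\otimes I_{n_\mathrm{x}})W=\mathrm{vec}(M_T)$ with $M_T:=\sum_{k=0}^{T-1}w_k\phi_k^\top$. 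Using $\bar D_T^{-1}=\bar V_T^{-1}\otimes I_{n_\mathrm{x}}$ and the standard $\mathrm{vec}$/trace identities, the weighted norm then decouples as
\begin{align*}
\|S_T\|^2_{\bar D_T^{-1}}=\mathrm{tr}\!\left(M_T\bar V_T^{-1}M_T^\top\right)=\sum_{i=1}^{n_\mathrm{x}}\nu_i^\top\!\left(M_T\bar V_T^{-1}M_T^\top\right)\nu_i,
\end{align*}
for any orthonormal basis $\{\nu_i\}_{i=1}^{n_\mathrm{x}}$ of $\mathbb{R}^{n_\mathrm{x}}$. The point of this identity is that for a fixed unit vector $\nu\in\mathbb{R}^{n_\mathrm{x}}$ the quantity $M_T^\top\nu=\sum_{k}\phi_k(\nu^\top w_k)$ is itself a \emph{scalar-noise} self-normalized martingale with features $\phi_k$, normalizer $\bar V_T$, and $\nu^\top(M_T\bar V_T^{-1}M_T^\top)\nu=\|M_T^\top\nu\|^2_{\bar V_T^{-1}}$.

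The second step applies the scalar bound direction-by-direction. By Assumption \ref{a1}, $w_k\sim\mathsf{subG}(0,\sigma_\mathrm{w}^2 I_{n_\mathrm{x}})$, so for $\|\nu\|=1$ the scalar noise $\nu^\top w_k$ is conditionally sub-Gaussian with proxy variance $\|\nu\|_{\sigma_\mathrm{w}^2 I}^2=\sigma_\mathrm{w}^2$. Hence \cite[Theorem 1]{abbasi2011improved}, with regularizer $\lambda I_{n_\phi}$, yields for each fixed $\nu$, with probability at least $1-\delta'$ and uniformly in $T$,
\begin{align*}
\|M_T^\top\nu\|^2_{\bar V_T^{-1}}\le 2\sigma_\mathrm{w}^2\log\!\left(\frac{\det(\bar V_T)^{1/2}}{\delta'\,\det(\lambda I_{n_\phi})^{1/2}}\right).
\end{align*}
The remaining work is to turn this continuum of per-direction guarantees into a single bound on $\|S_T\|^2_{\bar D_T^{-1}}$. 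I would introduce a $\tfrac12$-net $\mathcal{N}$ of the unit sphere $\mathbb{S}^{n_\mathrm{x}-1}$, whose cardinality satisfies $|\mathcal{N}|\le 5^{n_\mathrm{x}}$, take a union bound over $\mathcal{N}$ with per-point level $\delta'=\delta/5^{n_\mathrm{x}}$, and then use the net-approximation inequality to pass from $\max_{\nu\in\mathcal{N}}\|M_T^\top\nu\|_{\bar V_T^{-1}}$ back to the full quantity; the $\tfrac12$-net factor of $2$ in the norm becomes a factor $4$ after squaring, which together with the $2\sigma_\mathrm{w}^2$ above produces the constant $8\sigma_\mathrm{w}^2$. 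Finally, aggregating the $n_\mathrm{x}$ orthogonal contributions is exactly the $n_\mathrm{x}$-th power of the determinant ratio, and using $\det(\bar D_T)=\det(\bar V_T)^{n_\mathrm{x}}$ and $\det(\lambda I_{n_\phi n_\mathrm{x}})=\det(\lambda I_{n_\phi})^{n_\mathrm{x}}$ converts the $n_\phi$-dimensional determinant in the per-direction bound into the $n_\phi n_\mathrm{x}$-dimensional $\det(\bar D_T)$ appearing in \eqref{eq:nonreg-vector}.

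The main obstacle is precisely this last transition: making the union over the \emph{continuum} of directions rigorous and uniform in $T$, while ensuring that (i) the covering number enters only as $5^{n_\mathrm{x}}$ rather than dimension-fold in the exponent, (ii) the confidence level $\delta$ is not degraded by the aggregation over $n_\mathrm{x}$ directions, and (iii) the enlarged normalizer $\det(\bar D_T)$ is recovered cleanly from the $n_\mathrm{x}$ per-direction determinants $\det(\bar V_T)$. The delicate bookkeeping between the Frobenius quantity $\|M_T\bar V_T^{-1/2}\|_F^2=\|S_T\|^2_{\bar D_T^{-1}}$ and the per-direction spectral controls $\|M_T^\top\nu\|^2_{\bar V_T^{-1}}$ is where the worse constant ($8$ versus the scalar $2$) and the $5^{n_\mathrm{x}}$ covering factor are unavoidable, and it is the step I would write out most carefully; the reductions in the first two steps are routine Kronecker-algebra and a direct citation of the scalar result.
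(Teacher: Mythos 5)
Your first two steps are correct and efficient: the factorization $\bar D_T=\bar V_T\otimes I_{n_\mathrm{x}}$ with $\bar V_T:=D_T+\lambda I_{n_\phi}$, the identity $\|S_T\|^2_{\bar D_T^{-1}}=\mathrm{tr}\bigl(M_T\bar V_T^{-1}M_T^\top\bigr)$, and the observation that each $M_T^\top\nu$ is a scalar-noise self-normalized martingale to which \cite[Theorem 1]{abbasi2011improved} applies with proxy variance $\sigma_\mathrm{w}^2$; your overall covering strategy is also the same technique as in the references the paper cites in lieu of a proof (\cite[Proposition 8.2]{sarkar2019near}, \cite{tsiamis2023statistical}). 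The genuine gap is in the final aggregation step, precisely the one you flagged as delicate. The $\tfrac12$-net inequality $\sup_{\|\nu\|=1}\|M_T^\top\nu\|_{\bar V_T^{-1}}\le 2\max_{\nu\in\mathcal{N}}\|M_T^\top\nu\|_{\bar V_T^{-1}}$ controls only the \emph{spectral} norm of $\bar V_T^{-1/2}M_T^\top$, whereas \eqref{eq:nonreg-vector} concerns the \emph{Frobenius} quantity $\sum_{i=1}^{n_\mathrm{x}}\|M_T^\top\nu_i\|^2_{\bar V_T^{-1}}$. Passing from the sup to this sum costs a factor $n_\mathrm{x}$ that multiplies the \emph{entire} logarithm, including the $\log(5^{n_\mathrm{x}}/\delta)$ term, not only the determinant part. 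Hence your closing claim that ``aggregating the $n_\mathrm{x}$ orthogonal contributions is exactly the $n_\mathrm{x}$-th power of the determinant ratio'' is incorrect: carried out rigorously, your argument yields, on the union-bound event,
\begin{align*}
\|S_T\|^2_{\bar D_T^{-1}}\;\le\; n_\mathrm{x}\cdot 8\sigma_\mathrm{w}^2\log \left(\frac{5^{n_\mathrm{x}}\det(\bar V_T)^{\frac12}}{\delta\,\det(\lambda I_{n_\phi})^{\frac12}}\right)\;=\;R(\bar D_T)+8\sigma_\mathrm{w}^2(n_\mathrm{x}-1)\log\frac{5^{n_\mathrm{x}}}{\delta},
\end{align*}
which strictly exceeds $R(\bar D_T)$ whenever $n_\mathrm{x}\ge 2$. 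Within the covering framework, your requirements (ii) and (iii) are mutually incompatible: you cannot simultaneously pay only one copy of $\log(5^{n_\mathrm{x}}/\delta)$ and convert $\det(\bar V_T)$ into $\det(\bar D_T)=\det(\bar V_T)^{n_\mathrm{x}}$, because the $n_\mathrm{x}$-th power of the determinant ratio accounts only for the determinant portion of the $n_\mathrm{x}$ per-direction bounds.

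The lemma itself is true, and the missing idea is that no covering is needed for the Frobenius form: run the method-of-mixtures proof of \cite[Theorem 1]{abbasi2011improved} directly in the lifted dimension $n_\theta=n_\mathrm{x}n_\phi$, with matrix regressors $Z_k:=\phi_k\otimes I_{n_\mathrm{x}}$ and vector noise $w_k$, so that $S_T=\sum_k Z_k w_k$. Assumption \ref{a1} gives, for every $\gamma\in\mathbb{R}^{n_\theta}$, $\mathbb{E}[\exp(\gamma^\top Z_k w_k)\mid\mathcal{F}_{k-1}]\le\exp\bigl(\tfrac{\sigma_\mathrm{w}^2}{2}\gamma^\top Z_kZ_k^\top\gamma\bigr)$ with $\sum_k Z_kZ_k^\top=D_T\otimes I_{n_\mathrm{x}}$, so the exponential supermartingale construction and Gaussian mixing go through verbatim and yield, with probability at least $1-\delta$ for all $T$,
\begin{align*}
\|S_T\|^2_{\bar D_T^{-1}}\;\le\; 2\sigma_\mathrm{w}^2\log \left(\frac{\det(\bar D_T)^{\frac12}}{\delta\,\det(\lambda I_{n_\mathrm{x}n_\phi})^{\frac12}}\right)\;\le\; R(\bar D_T),
\end{align*}
i.e., a bound stronger than \eqref{eq:nonreg-vector} (constant $2$, no $5^{n_\mathrm{x}}$); this also disproves your remark that the factor $8$ and the covering factor $5^{n_\mathrm{x}}$ are unavoidable for this quantity. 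Note that the cited covering-based results bound $\|\bar V_T^{-1/2}M_T^\top\|$ with the small determinant $\det(\bar V_T)$, so they do not literally deliver the vector-norm statement either; the mixture argument above is the route that proves Lemma \ref{lem:self-norm-bound-vector} exactly as written.
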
}
In order to derive the confidence ellipsoid, we first derive a bound on $\theta_\mathrm{tr}$ of the form $\|\theta_\mathrm{tr}\| \leq \bar{\theta}$. By applying the Schur complement twice to the condition in \eqref{eq:Theta0}, we get $(\hat{\theta}_0-\theta_\mathrm{tr})(\hat{\theta}_0-\theta_\mathrm{tr})^\top \preceq (D_0 \otimes I_{n_\mathrm{x}})^{-1}$, from which we have
\begin{align}\label{eq:theta0diffnorm}
    \|\hat{\theta}_0-\theta_\mathrm{tr}\| \leq \|D_0^{-\frac{1}{2}}\|.
\end{align}
Using the triangle inequality, we have
\begin{align}\label{eq:bartheta}
    \|\theta_\mathrm{tr}\| & \leq \|\hat{\theta}_0\| +  \|\hat{\theta}_0-\theta_\mathrm{tr}\| \overset{\eqref{eq:theta0diffnorm}}{\leq}  \|\hat{\theta}_0\| + \|D_0^{-\frac{1}{2}}\|=: \bar{\theta}.
\end{align}
The following theorem utilizes the result in Lemma~\ref{lem:self-norm-bound-vector} to derive a data-dependent uncertainty bound in the form of a confidence ellipsoid centered at the estimate $\hat{\theta}_{T}$ for the uncertain parameters $\theta_\mathrm{tr}$.

\begin{theorem}\label{thm:RLS_ellipsoid}\cite[Theorem 2]{abbasi2011improved}
Let Assumptions \ref{a1} and \ref{a2} hold. Then, for any $\delta > 0$, $\mathbb{P}[\theta_\mathrm{tr} \in \Theta_T, \forall T] \geq 1-\delta$ where
\begin{align}\label{eq:nonreg_ellipsoid}
    \Theta_T =\left\{ \theta: \, \|\hat{\theta}_{T} - \theta_\mathrm{tr} \|_{\bar{D}_T} \leq R(\bar{D}_T)^\frac{1}{2} + \lambda^\frac{1}{2} \bar{\theta} \right\}.
\end{align}
\end{theorem}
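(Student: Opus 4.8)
The plan is to start from the explicit regularized least-squares error decomposition \eqref{eq:errtheta}, namely $\hat{\theta}_T - \theta_\mathrm{tr} = -\lambda \bar{D}_T^{-1}\theta_\mathrm{tr} + \bar{D}_T^{-1}S_T$, and to bound its two contributions separately in the $\bar{D}_T$-weighted norm. Applying the triangle inequality for $\|\cdot\|_{\bar{D}_T}$ gives
\[
\|\hat{\theta}_T - \theta_\mathrm{tr}\|_{\bar{D}_T} \leq \lambda\|\bar{D}_T^{-1}\theta_\mathrm{tr}\|_{\bar{D}_T} + \|\bar{D}_T^{-1}S_T\|_{\bar{D}_T},
\]
where the first summand is a deterministic regularization bias and the second is the stochastic martingale contribution. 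The key algebraic observation, used in both terms, is the cancellation $\|\bar{D}_T^{-1}v\|_{\bar{D}_T}^2 = v^\top \bar{D}_T^{-1}\bar{D}_T\bar{D}_T^{-1}v = \|v\|_{\bar{D}_T^{-1}}^2$, valid since $\bar{D}_T$ is symmetric positive definite.

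For the stochastic term, this identity yields $\|\bar{D}_T^{-1}S_T\|_{\bar{D}_T} = \|S_T\|_{\bar{D}_T^{-1}}$, so Lemma \ref{lem:self-norm-bound-vector} directly gives $\|\bar{D}_T^{-1}S_T\|_{\bar{D}_T} \leq R(\bar{D}_T)^{1/2}$ with probability at least $1-\delta$, simultaneously for all $T \in \mathbb{N}$. I would emphasize that the uniformity over $T$ is supplied verbatim by the self-normalized bound and does not require any union bound over time; this is the whole point of working with self-normalized martingales.

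For the deterministic term, the same cancellation gives $\lambda\|\bar{D}_T^{-1}\theta_\mathrm{tr}\|_{\bar{D}_T} = \lambda\|\theta_\mathrm{tr}\|_{\bar{D}_T^{-1}}$. Since $\bar{D}_T = (D_T \otimes I_{n_\mathrm{x}}) + \lambda I \succeq \lambda I$ and hence $\bar{D}_T^{-1} \preceq \lambda^{-1}I$, this is at most $\lambda\cdot\lambda^{-1/2}\|\theta_\mathrm{tr}\| = \lambda^{1/2}\|\theta_\mathrm{tr}\|$, which is in turn bounded by $\lambda^{1/2}\bar{\theta}$ using the prior norm bound \eqref{eq:bartheta}. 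Adding the two contributions reproduces exactly the radius $R(\bar{D}_T)^{1/2} + \lambda^{1/2}\bar{\theta}$ in \eqref{eq:nonreg_ellipsoid}, so on the probability-$(1-\delta)$ event we obtain $\theta_\mathrm{tr} \in \Theta_T$ for every $T$.

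The only genuinely nontrivial ingredient is the self-normalized bound of Lemma \ref{lem:self-norm-bound-vector}; everything else is an algebraic rearrangement together with the elementary operator inequality $\bar{D}_T^{-1} \preceq \lambda^{-1}I$. Consequently I do not expect a computational obstacle, but rather a conceptual one to flag clearly: the ``for all $T$'' in the conclusion is inherited directly from the anytime validity of the Lemma, and the deterministic bias bound holds for every $T$ trivially, so no further probabilistic argument is needed to make the confidence ellipsoid simultaneously valid at all time steps.
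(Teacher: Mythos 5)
Your proof is correct and matches the approach underlying the paper's stated result: the paper offers no proof of this theorem itself (it is quoted from \cite[Theorem 2]{abbasi2011improved}), and the standard argument there is precisely your decomposition \eqref{eq:errtheta} into the martingale term, controlled uniformly in $T$ by Lemma \ref{lem:self-norm-bound-vector} with no union bound needed, and the regularization bias, controlled by $\bar{D}_T \succeq \lambda I$ together with $\|\theta_\mathrm{tr}\| \leq \bar{\theta}$ from \eqref{eq:bartheta}. The only nitpick is terminological: the bias term $\lambda\|\bar{D}_T^{-1}\theta_\mathrm{tr}\|_{\bar{D}_T}$ is not deterministic, since $\bar{D}_T$ is data-dependent, but your bound $\lambda^{1/2}\bar{\theta}$ on it holds pathwise for every realization, so your conclusion that no further probabilistic argument is required for that term stands.
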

We utilize this data-dependent uncertainty bound to design a targeted exploration strategy that achieves the exploration goal provided in \eqref{eq:exp_goal2}.

\section{Data-dependent sufficient conditions for targeted exploration}\label{sec:suffcond_data}
Given the uncertainty bound in Theorem \ref{thm:RLS_ellipsoid}, the following theorem presents conditions on the exploration data that imply the exploration goal \eqref{eq:exp_goal2}.

\begin{theorem}\label{thm:RLS_suffcond} Let Assumptions \ref{a1} and \ref{a2} hold. Suppose $\Phi$ satisfies
\begin{align}\label{eq:suffcond_rls}
    \Phi \Phi^\top + \lambda I_{n_\phi}-\left(R(\bar{D}_T)^\frac{1}{2} + \lambda^\frac{1}{2} \bar{\theta} \right)^2  D_\mathrm{des}  \succeq 0
\end{align}
with $\bar{D}_T$ according to \eqref{eq:reg_mean_est}. Then, the estimate $\hat{\theta}_{T}$ computed as in \eqref{eq:mean_est_rls} satisfies the exploration goal \eqref{eq:exp_goal2} with probability at least $1-\delta$.
\end{theorem}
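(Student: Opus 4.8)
The plan is to show that the stated LMI deterministically upgrades the confidence ellipsoid of Theorem~\ref{thm:RLS_ellipsoid} into the target ellipsoid \eqref{eq:exp_goal2}, so that the probability bound carries over verbatim. Abbreviate $c:=R(\bar{D}_T)^{\frac{1}{2}}+\lambda^{\frac{1}{2}}\bar{\theta}$; since $\lambda>0$ and $\bar{\theta}=\|\hat{\theta}_0\|+\|D_0^{-\frac{1}{2}}\|>0$ (as $D_0\succ0$ forces $\|D_0^{-\frac{1}{2}}\|>0$), we have $c>0$, which legitimizes dividing by $c^2$ at the end. By Theorem~\ref{thm:RLS_ellipsoid}, the event
\[
\mathcal{E}:=\left\{\|\hat{\theta}_T-\theta_\mathrm{tr}\|_{\bar{D}_T}^2\le c^2\right\}
\]
has probability at least $1-\delta$, and on $\mathcal{E}$ it remains only to relate the weighting $\bar{D}_T$ to $D_\mathrm{des}\otimes I_{n_\mathrm{x}}$.

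First I would expose the Kronecker structure of the weighting matrix. From \eqref{eq:reg_mean_est} and \eqref{eq:covar_est}, $\bar{D}_T=(\Phi\Phi^\top\otimes I_{n_\mathrm{x}})+\lambda I_{n_\phi n_\mathrm{x}}$, and using $\lambda I_{n_\phi n_\mathrm{x}}=(\lambda I_{n_\phi})\otimes I_{n_\mathrm{x}}$ together with bilinearity of $\otimes$ gives the compact identity $\bar{D}_T=(\Phi\Phi^\top+\lambda I_{n_\phi})\otimes I_{n_\mathrm{x}}$. Next I would lift the assumed LMI \eqref{eq:suffcond_rls} through this Kronecker product: since $M\succeq0$ implies $M\otimes I_{n_\mathrm{x}}\succeq0$, applying this to $M=\Phi\Phi^\top+\lambda I_{n_\phi}-c^2 D_\mathrm{des}$ and distributing $\otimes I_{n_\mathrm{x}}$ yields
\[
\bar{D}_T-c^2\,(D_\mathrm{des}\otimes I_{n_\mathrm{x}})\succeq0,
\qquad\text{i.e.}\qquad
\bar{D}_T\succeq c^2\,(D_\mathrm{des}\otimes I_{n_\mathrm{x}}).
\]

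The final step is to chain the two matrix facts. Writing $e:=\theta_\mathrm{tr}-\hat{\theta}_T$ and taking the quadratic form of the last inequality along $e$ (the form is insensitive to the sign of $e$) gives $c^2\,e^\top(D_\mathrm{des}\otimes I_{n_\mathrm{x}})e\le e^\top\bar{D}_T e=\|\hat{\theta}_T-\theta_\mathrm{tr}\|_{\bar{D}_T}^2$, which on $\mathcal{E}$ is in turn bounded above by $c^2$. Dividing by $c^2>0$ produces exactly $e^\top(D_\mathrm{des}\otimes I_{n_\mathrm{x}})e\le1$, i.e.\ the exploration goal \eqref{eq:exp_goal2}. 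Since this chain holds on $\mathcal{E}$ and $\mathbb{P}[\mathcal{E}]\ge1-\delta$, the claim follows.

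As for difficulty, I do not anticipate a genuine obstacle: the result is essentially a congruence/monotonicity argument, and the only points demanding care are (i) the Kronecker rewriting of $\bar{D}_T$, where one must not conflate $I_{n_\phi n_\mathrm{x}}$ with $I_{n_\phi}$, and (ii) confirming $c>0$ so that the division is valid. A secondary subtlety worth a sentence is that $R(\bar{D}_T)$, and hence $c$ and the LMI itself, depends on the realized $\Phi$; I would therefore read \eqref{eq:suffcond_rls} as a condition imposed on the trajectory in question, so that the probability $1-\delta$ is inherited directly from Theorem~\ref{thm:RLS_ellipsoid} with no additional union bound.
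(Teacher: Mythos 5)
Your proposal is correct, and it rests on the same two pillars as the paper's own proof: the confidence ellipsoid of Theorem~\ref{thm:RLS_ellipsoid} and the observation that \eqref{eq:suffcond_rls}, lifted through $\otimes\, I_{n_\mathrm{x}}$, yields $\bar{D}_T \succeq c^2 (D_\mathrm{des}\otimes I_{n_\mathrm{x}})$ with $c:=R(\bar{D}_T)^{\frac{1}{2}}+\lambda^{\frac{1}{2}}\bar{\theta}$. Where you differ is in the mechanics of combining them. The paper applies the Schur complement twice to convert the ellipsoid \eqref{eq:nonreg_ellipsoid} into the rank-one matrix inequality $(\hat{\theta}_T-\theta_\mathrm{tr})(\hat{\theta}_T-\theta_\mathrm{tr})^\top \preceq c^2 \bar{D}_T^{-1}$, rewrites \eqref{eq:suffcond_rls} as the inverted bound $\bar{D}_T^{-1} \preceq c^{-2}(D_\mathrm{des}\otimes I_{n_\mathrm{x}})^{-1}$, chains the two matrix inequalities, and then applies the Schur complement twice again to land on \eqref{eq:exp_goal2}. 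You never invert anything: you sandwich the scalar quadratic form, $c^2\, e^\top (D_\mathrm{des}\otimes I_{n_\mathrm{x}})\, e \le e^\top \bar{D}_T\, e \le c^2$, and divide by $c^2$. Your route is more elementary --- it avoids both the double-Schur-complement steps and the appeal to anti-monotonicity of matrix inversion in the Loewner order --- and it makes explicit two points the paper leaves implicit: that $c>0$ (needed for the paper's $c^{-2}$ just as much as for your division) and that \eqref{eq:suffcond_rls} is read as a condition on the realized trajectory, so the $1-\delta$ probability of Theorem~\ref{thm:RLS_ellipsoid} is inherited without any union bound. What the paper's formulation buys is mainly stylistic uniformity: the same outer-product/Schur-complement device is used to derive \eqref{eq:theta0diffnorm} and recurs in the appendices, so its proof reads as one more instance of a standard pattern. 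Mathematically, the two arguments carry identical content.
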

\ifbool{arxivversion}{The proof of Theorem \ref{thm:RLS_suffcond} may be found in Appendix \ref{app:thm_RLS_suffcond}.}{The proof of Theorem \ref{thm:RLS_suffcond} may be found in \cite[Appendix I]{venkatasubramanian2025beyond}.} In contrast to standard asymptotic bounds, e.g., \cite{umenberger2019robust, venkatasubramanian2023sequential}, Theorem \ref{thm:RLS_suffcond} depends additionally on the term $(R(\bar{D}_T)^\frac{1}{2} + \lambda^\frac{1}{2} \bar{\theta})^2$ which is  nonconvex in $\bar{D}_T$. Hence, in the following lemma we derive a sufficient condition that is convex in $\bar{D}_T$.%\begin{lemma}\label{lem:boundRDT}
% For any $\bar{D}_T \in \mathbb{R}^{n_\theta}$, we have
% \begin{align}\label{eq:bound_RDT_lambda}
%     \left(R(\bar{D}_T)^\frac{1}{2} + \lambda^\frac{1}{2} \bar{\theta} \right)^2  \leq  \left(2C_1 + 8 \sigma_\mathrm{w}^2 \log(\det(\bar{D}_T)) + 2 \lambda \bar{\theta}^2\right) 
% \end{align}
% where
% \begin{align}\label{eq:C1}
%     C_1:=4 \sigma_\mathrm{w}^2 \left( \log \left( \frac{5^{2 n_\mathrm{x}}}{\delta^{2}}\right) -n_\theta\log (\lambda) \right).
% \end{align} 
% \end{lemma}
\JV{\begin{lemma}\label{lem:boundRDT}
For any $\bar{D}_T \in \mathbb{R}^{n_\theta}$, we have
\begin{align}\label{eq:bound_RDT_lambda}
    (R(\bar{D}_T)^\frac{1}{2} + \lambda^\frac{1}{2} \bar{\theta})^2  \leq  (2C_1 + 2 n_\mathrm{x} \sigma_\mathrm{w}^2 \log(\det(\bar{D}_T)) + 2 \lambda \bar{\theta}^2 ) 
\end{align}
where
\begin{align}\label{eq:C1}
    C_1:= n_\mathrm{x} \sigma_\mathrm{w}^2 \left( \log \left( \frac{n_\mathrm{x}^2}{\delta^{2}}\right) -n_\theta\log (\lambda) \right).
\end{align} 
\end{lemma}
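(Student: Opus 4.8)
The plan is to reduce the claim to the elementary quadratic inequality $(a+b)^2 \le 2a^2 + 2b^2$ after isolating the $\det(\bar D_T)$-dependence of $R(\bar D_T)$.

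First I would rewrite $R(\bar D_T)$ in an additively separated form. Using $\det(\lambda I_{n_\mathrm{x} n_\phi}) = \lambda^{n_\theta}$ (since $n_\theta = n_\mathrm{x} n_\phi$) and $\log \det(\bar D_T)^{1/2} = \tfrac12 \log\det(\bar D_T)$, the definition \eqref{eq:nonreg-vector} expands to
\begin{align*}
R(\bar D_T) = 8\sigma_\mathrm{w}^2\Big( n_\mathrm{x}\log 5 - \log\delta - \tfrac{n_\theta}{2}\log\lambda \Big) + 4\sigma_\mathrm{w}^2 \log\det(\bar D_T),
\end{align*}
and comparing with \eqref{eq:C1} this is exactly $R(\bar D_T) = C_1 + 4\sigma_\mathrm{w}^2\log\det(\bar D_T)$. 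In particular $2R(\bar D_T) = 2C_1 + 8\sigma_\mathrm{w}^2 \log\det(\bar D_T)$, which is precisely the first two terms of the claimed upper bound in \eqref{eq:bound_RDT_lambda}.

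Next I would apply the elementary bound $(a+b)^2 \le 2a^2 + 2b^2$ (equivalently $2ab \le a^2+b^2$, i.e.\ $(a-b)^2 \ge 0$) with $a = R(\bar D_T)^{1/2}$ and $b = \lambda^{1/2}\bar\theta$, giving
\begin{align*}
\big( R(\bar D_T)^{1/2} + \lambda^{1/2}\bar\theta \big)^2 \le 2R(\bar D_T) + 2\lambda\bar\theta^2.
\end{align*}
Substituting the separated expression for $2R(\bar D_T)$ from the first step then yields exactly the right-hand side of \eqref{eq:bound_RDT_lambda}.

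There is no real obstacle here: the statement is a one-line consequence of Young's inequality once $R(\bar D_T)$ is simplified. The only point that needs a word of care is that writing $a = R(\bar D_T)^{1/2}$ presumes $R(\bar D_T) \ge 0$; this holds because $\bar D_T = (D_T\otimes I_{n_\mathrm{x}}) + \lambda I \succeq \lambda I$ forces $\det(\bar D_T) \ge \lambda^{n_\theta} = \det(\lambda I_{n_\mathrm{x} n_\phi})$, so the argument of the logarithm in \eqref{eq:nonreg-vector} is at least $5^{n_\mathrm{x}}/\delta \ge 1$ and hence $R(\bar D_T) \ge 0$.
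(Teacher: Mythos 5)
Your proof is correct and follows essentially the same route as the paper's: decompose $R(\bar{D}_T)$ as $C_1 + 4\sigma_\mathrm{w}^2\log(\det(\bar{D}_T))$ and then apply Young's inequality $(a+b)^2 \le 2a^2 + 2b^2$ with $a = R(\bar{D}_T)^{1/2}$, $b = \lambda^{1/2}\bar{\theta}$. The only additions are that you note the decomposition is in fact an equality (the paper writes it as an inequality) and that you verify $R(\bar{D}_T) \ge 0$ via $\bar{D}_T \succeq \lambda I$ so the square root is well-defined, a point the paper leaves implicit.
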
}
\ifbool{arxivversion}{The proof of Lemma \ref{lem:boundRDT} may be found in Appendix \ref{app:lem_boundRDT}.}{The proof of Lemma \ref{lem:boundRDT} may be found in \cite[Appendix II]{venkatasubramanian2025beyond}.}% \begin{proposition}\label{prop:RLS_suffcond_RDTbound} Let Assumptions \ref{a1} and \ref{a2} hold. Suppose $\Phi$ satisfies
% \begin{align}\label{eq:suffcond_prop}
% \nonumber
%     \Phi \Phi^\top + \lambda I_{n_\phi}&\\
%     -\left( 2C_1 + 8 \sigma_\mathrm{w}^2 \log(\det(\bar{D}_T)) + 2 \lambda \bar{\theta}^2 \right) D_\mathrm{des} &\succeq 0.
% \end{align}
% Then, the estimate $\hat{\theta}_{T}$ computed as in \eqref{eq:mean_est_rls} satisfied the exploration goal \eqref{eq:exp_goal2} with probability at least $1-\delta$.
% \end{proposition}
\begin{proposition}\label{prop:RLS_suffcond_RDTbound} Let Assumptions \ref{a1} and \ref{a2} hold. Suppose $\Phi$ satisfies
\begin{align}\label{eq:suffcond_prop}
\nonumber
    \Phi \Phi^\top + \lambda I_{n_\phi}&\\
    -\left( 2C_1 + \JV{2 n_\mathrm{x}} \sigma_\mathrm{w}^2 \log(\det(\bar{D}_T)) + 2 \lambda \bar{\theta}^2 \right) D_\mathrm{des} &\succeq 0.
\end{align}
Then, the estimate $\hat{\theta}_{T}$ computed as in \eqref{eq:mean_est_rls} satisfied the exploration goal \eqref{eq:exp_goal2} with probability at least $1-\delta$.
\end{proposition}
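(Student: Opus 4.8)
The plan is to prove the proposition by showing that its hypothesis \eqref{eq:suffcond_prop} is stronger than---and hence implies---the hypothesis \eqref{eq:suffcond_rls} of Theorem \ref{thm:RLS_suffcond}; the conclusion then follows verbatim from that theorem. The entire argument is a direct combination of Theorem \ref{thm:RLS_suffcond} with the scalar upper bound established in Lemma \ref{lem:boundRDT}.

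First I would isolate the two scalar coefficients multiplying $D_\mathrm{des}$ in the respective matrix inequalities. Set
\[
\alpha := \left(R(\bar{D}_T)^{\frac{1}{2}} + \lambda^{\frac{1}{2}} \bar{\theta}\right)^2, \qquad \beta := 2C_1 + 8\sigma_\mathrm{w}^2 \log(\det(\bar{D}_T)) + 2\lambda\bar{\theta}^2,
\]
so that condition \eqref{eq:suffcond_rls} reads $\Phi\Phi^\top + \lambda I_{n_\phi} - \alpha D_\mathrm{des} \succeq 0$, while the standing hypothesis \eqref{eq:suffcond_prop} reads $\Phi\Phi^\top + \lambda I_{n_\phi} - \beta D_\mathrm{des} \succeq 0$. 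Lemma \ref{lem:boundRDT} gives precisely $\alpha \le \beta$.

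Next I would lift this scalar inequality to an inequality in the Loewner order, using $D_\mathrm{des} \succ 0$. Since $\beta - \alpha \ge 0$ and $D_\mathrm{des} \succeq 0$, the matrix $(\beta - \alpha) D_\mathrm{des}$ is positive semidefinite, equivalently $\alpha D_\mathrm{des} \preceq \beta D_\mathrm{des}$. Subtracting these from $\Phi\Phi^\top + \lambda I_{n_\phi}$ preserves the order and yields
\[
\Phi\Phi^\top + \lambda I_{n_\phi} - \alpha D_\mathrm{des} \;\succeq\; \Phi\Phi^\top + \lambda I_{n_\phi} - \beta D_\mathrm{des} \;\succeq\; 0,
\]
where the last inequality is exactly the assumed condition \eqref{eq:suffcond_prop}. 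Hence $\Phi$ satisfies \eqref{eq:suffcond_rls}, and an application of Theorem \ref{thm:RLS_suffcond} shows that $\hat{\theta}_T$ attains the exploration goal \eqref{eq:exp_goal2} with probability at least $1-\delta$.

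I do not expect a genuine obstacle, since the result is merely a monotone tightening of a previously verified sufficient condition. The one step deserving care is the passage from the scalar bound of Lemma \ref{lem:boundRDT} to the matrix inequality: here one relies on the fact that scaling the positive semidefinite matrix $D_\mathrm{des}$ by the nonnegative scalar $\beta-\alpha$ again yields a positive semidefinite matrix, so that the gap between the two feasibility conditions is itself positive semidefinite. The remainder is a routine chaining of matrix inequalities.
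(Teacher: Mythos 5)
Your proposal is correct and follows exactly the paper's own argument: both use Lemma \ref{lem:boundRDT} to bound the scalar coefficient $\left(R(\bar{D}_T)^{\frac{1}{2}} + \lambda^{\frac{1}{2}}\bar{\theta}\right)^2$ by the coefficient in \eqref{eq:suffcond_prop}, lift this to the Loewner order via $D_\mathrm{des} \succeq 0$ to conclude that \eqref{eq:suffcond_prop} implies \eqref{eq:suffcond_rls}, and then invoke Theorem \ref{thm:RLS_suffcond}. No gaps; your explicit remark on why the scalar inequality transfers to the matrix inequality is the only step the paper leaves implicit.
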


\ifbool{arxivversion}{The proof of Proposition \ref{prop:RLS_suffcond_RDTbound} may be found in Appendix \ref{app:prop_RLS_sc}.}{The proof of Proposition \ref{prop:RLS_suffcond_RDTbound} may be found in \cite[Appendix III]{venkatasubramanian2025beyond}.} Note that Inequality \eqref{eq:suffcond_prop} depends quadratically on $\Phi$, specifically on the finite excitation term $\Phi \Phi^\top$ \cite{sarker2020parameter}. This further depends on the amplitudes of the inputs $U_\mathrm{e}$ and the disturbances. Additionally, the linear mapping from the input sequence to the state sequence is uncertain due to the uncertainty in the true dynamics. These issues are addressed in the next section.

\section{Sufficient conditions for targeted exploration based on spectral lines}\label{sec:suffcond_spectral}
In this section, we determine sufficient conditions for exploration in terms of the spectral content of $\Phi$ in order to pose the exploration problem directly in terms of the inputs. To this end, we first define the amplitude of a spectral line.
\begin{definition}
(Amplitude of a spectral line \cite{sarker2020parameter}) Given a sequence $\{\phi_k\}_{k=0}^{T-1}$, the amplitude of the spectral line of the sequence $\bar{\phi}(\omega_0)$ at a frequency $\omega_0 \in \Omega_T:=\{0,1/T,\dots,(T-1)/T\}$ is given by
\begin{align}\label{eq:spectral_amplitude}
 \bar{\phi}(\omega_0):=\frac{1}{T}\sum_{k=0}^{T-1}\phi_k e^{-j2\pi \omega_0 k}.
\end{align}
\end{definition}
In what follows, we first establish the exploration strategy in Section \ref{sec:expstrategy} and then provide a bound on the excitation $\Phi \Phi^\top$ in terms of spectral information in Section \ref{sec:excitation}. We then explicitly account for transient error due to the input and the effect of disturbances in Section \ref{sec:te_dist}. In order to derive conditions linear in the decision variables, we provide a convex relaxation procedure in Section \ref{sec:convexrel} and an upper bound on excitation in Section \ref{sec:ubDT}. Finally, we account for parametric uncertainty and obtain an SDP for exploration, which provides us with exploration inputs that ensure the exploration goal in Section \ref{sec:expLMI}.

\subsection{Exploration strategy}\label{sec:expstrategy}
The exploration input sequence takes the form
\begin{align}\label{eq:exploration_controller}
u_k=\sum_{i=1}^{L} \bar{u}(\omega_i) \cos(2 \pi \omega_i k), \quad\, k=0,\dots,T-1
\end{align}
where $\bar{u}(\omega_i) \in \mathbb{R}^{n_\mathrm{u}}$ are the amplitudes of the sinusoidal inputs at $L \in \mathbb{N}$ distinct selected frequencies $\omega_i \in \Omega_T$.
The amplitude of the spectral line of the sequence $\{u_k\}_{k=0}^{T-1}$ at frequency $\omega_i$ is $\bar{u}(\omega_i)$. Denote $U_\mathrm{e}=\mathrm{diag}(\bar{u}(\omega_1),\dots,\bar{u}(\omega_L)) \in \mathbb{R}^{Ln_\mathrm{u} \times L}$.
By the Parseval-Plancheral identity, bounding the average input energy by a constant $\gamma^2_\mathrm{e}$ \eqref{eq:gammau} can be equivalently written as
\begin{align}\label{eq:upars}
\frac{1}{T}\|U\|^2 =\sum_{i=1}^L \|\bar{u}(\omega_i)\|^2= \mathbf{1}_{L}^\top U_\mathrm{e}^\top U_\mathrm{e} \mathbf{1}_{L} \leq \gamma_\mathrm{e}^2
\end{align}
where $\mathbf{1}_{L} \in \mathbb{R}^{L\times 1}$ is a vector of ones, and the bound $\gamma_\mathrm{e} \geq 0$ is desired to be small. Using the Schur complement, this criterion is equivalent to the following LMI:
\begin{equation}\label{eq:min_energy_cost}
S_{\textnormal{energy-bound-1}}(\gamma_\mathrm{e},U_\mathrm{e})\coloneqq \begin{bmatrix} 
\gamma_\mathrm{e} & \mathbf{1}_{L}^\top U_\mathrm{e}^\top \\ U_\mathrm{e} \mathbf{1}_{L} & \gamma_\mathrm{e} I
\end{bmatrix} \succeq 0.
\end{equation}

\subsection{Bound on excitation based on spectral lines}\label{sec:excitation}
The uncertainty bound on the parameters of the system evolving under the exploration input sequence \eqref{eq:exploration_controller} can be computed from the spectral content of the observed state $x_k$ and the applied input $u_k$. Given $u_k$ as in \eqref{eq:exploration_controller}, $x_k$ and $u_k$ have $L$ spectral lines from $0$ to $T-1$ at distinct frequencies $\omega_i \in \Omega_T$, $i=1,\dots,L$ with amplitudes $\bar{x}(\omega_i)$ and $\bar{u}(\omega_i)$. Recall $\phi_k=[x_k^\top\,u_k^\top]^\top$. From \eqref{eq:sys}, we have
\begin{align}\label{eq:phibar}
    \bar{x}(\omega_i) & = \underbrace{\left[ (e^{j\omega_i}I-A_\mathrm{tr})^{-1}B_\mathrm{tr}\right] \bar{u}(\omega_i)}_{\bar{x}_\mathrm{u,d}(\omega_i)} + \bar{x}_\mathrm{u,t}(\omega_i) + \bar{x}_\mathrm{w}(\omega_i),
\end{align}
where $\bar{x}_{\mathrm{u,d}}(\omega_i)$ denotes the asymptotic effect of the input, $\bar{x}_{\mathrm{u,t}}(\omega_i)$ denotes the transient error and $\bar{x}_{\mathrm{w}}(\omega_i)$ denotes the effect of the disturbance. Furthermore, we have
\begin{align}
\nonumber
    \bar{x}_\mathrm{u}(\omega_i)
    =& \bar{x}_\mathrm{u,d}(\omega_i)+\bar{x}_\mathrm{u,t}(\omega_i)\\
\nonumber
    =&\frac{1}{T}\sum_{k=0}^{T-1} \left( \sum_{l=0}^k A_\mathrm{tr}^{k-l}B_\mathrm{tr} u_l \right) e^{-j2 \pi \omega_i k}\\
\nonumber
    =& \underbrace{\frac{1}{T}\left(\begin{bmatrix}
    e^{-j2 \pi \omega_i 0} & \cdots & e^{-j2 \pi \omega_i (T-1)}
\end{bmatrix} \otimes I_{n_\mathrm{x}} \right)}_{=:F_{x,\omega_i}}\\
\label{eq:barphiu_redef}
    &\times \underbrace{\begin{bmatrix}
        B_\mathrm{tr} & 0 &  \cdots & \vdots\\
        A_\mathrm{tr}B_\mathrm{tr} & B_\mathrm{tr} & \cdots & \vdots \\
        \vdots & \cdots &  B_\mathrm{tr}& 0 \\
        A_\mathrm{tr}^{T-1}B_\mathrm{tr} & \cdots &  A_\mathrm{tr}B_\mathrm{tr} & B_\mathrm{tr}
    \end{bmatrix}}_{A_\mathrm{u}}
    U,\\
    \nonumber
    \bar{x}_\mathrm{w}(\omega_i)=&\frac{1}{T}\sum_{k=0}^{T-1} \left( \sum_{l=0}^k A_\mathrm{tr}^{k-l} w_l \right) e^{-j2 \pi \omega_i k}\\
\label{eq:barphiw_redef}
    =&F_{x,\omega_i} \underbrace{\begin{bmatrix}
        I & 0 & \cdots & \vdots\\
        A_\mathrm{tr} & I & \cdots & \vdots \\
        \vdots & \cdots &  I& 0 \\
        A_\mathrm{tr}^{T-1} & \cdots & A_\mathrm{tr} & I
    \end{bmatrix}}_{A_\mathrm{w}} W.
\end{align}
% and $\bar{x}_\mathrm{w}(\omega_i)$ as
Further, $\bar{u}(\omega_i)$ can be written as
\begin{align}
    \bar{u}(\omega_i)=\underbrace{\frac{1}{T}\left(\begin{bmatrix}
    e^{-j2 \pi \omega_i 0} & \cdots & e^{-j2 \pi \omega_i (T-1)}
\end{bmatrix} \otimes I_{n_\mathrm{u}} \right)}_{=:F_{u,\omega_i}}U.
\end{align}

Note that for all $\omega_i \in \Omega_T$, we have
\begin{align}\label{eq:supFwi}
     \|F_{x,\omega_i}\|^2=\|F_{u,\omega_i}\|^2 = \frac{1}{T}.
\end{align}
% Recall $\bar{\phi}(\omega_i)=[\bar{x}(\omega_i)^\mathsf{H} , \bar{u}(\omega_i)^\mathsf{H}]^\mathsf{H}$. The amplitudes of the spectral lines of $\phi_k$ from $0$ to $T-1$ at distinct frequencies $\omega_i \in \Omega_T$, $i=1,\dots,L$ are
% \begin{small}
% \begin{align}
% \bar{\phi}(\omega_i)=\underbrace{\underbrace{\begin{bmatrix}
% (e^{j \omega_i}I - A_\mathrm{tr})^{-1}B_\mathrm{tr} \\ I_{n_\mathrm{u}}
% \end{bmatrix}}_{=:V_i} \bar{u}(\omega_i)}_{=:\bar{\phi}_\mathrm{u,d}(\omega_i)=[\bar{x}_\mathrm{u,d}(\omega_i)^\mathsf{H}, \bar{u}(\omega_i)^\mathsf{H}]^\mathsf{H}} + \underbrace{\bar{\phi}_{\mathrm{u,t}}(\omega_i)}_{=\left[\bar{x}_\mathrm{u,t}(\omega_i) \atop 0\right]}+\underbrace{\bar{\phi}_\mathrm{w}(\omega_i)}_{=\left[\bar{x}_\mathrm{w}(\omega_i) \atop 0\right]},
% \end{align}
% \end{small}
% where $\bar{\phi}_{\mathrm{u,d}}(\omega_i)$ denotes the asymptotic effect of the input, $\bar{\phi}_{\mathrm{u,t}}(\omega_i)$ denotes the transient error and $\bar{\phi}_{\mathrm{w}}(\omega_i)$ denotes the effect of the disturbance. 
Using \eqref{eq:phibar}, we have
\begin{align}
    \bar{\phi}(\omega_i)&=\underbrace{\left[\bar{x}_\mathrm{u,d}(\omega_i) \atop \bar{u}(\omega_i)\right]}_{\bar{\phi}_\mathrm{u,d}(\omega_i)}+\underbrace{\left[\bar{x}_\mathrm{u,t}(\omega_i) \atop 0\right]}_{\bar{\phi}_\mathrm{u,t}(\omega_i)}+\underbrace{\left[\bar{x}_\mathrm{w}(\omega_i) \atop 0\right]}_{\bar{\phi}_\mathrm{w}(\omega_i)},
\end{align}
where
\begin{align}
\bar{\phi}_\mathrm{u,d}(\omega_i)=\underbrace{\begin{bmatrix}
(e^{j \omega_i}I - A_\mathrm{tr})^{-1}B_\mathrm{tr} \\ I_{n_\mathrm{u}}
\end{bmatrix}}_{=:V_i} \bar{u}(\omega_i).
\end{align}
We compactly define
\begin{align}
    \bar{\Phi}_\ast& =[\bar{\phi}_\ast (\omega_1), \dots ,\bar{\phi}_\ast (\omega_L)] \in \mathbb{C}^{n_\phi \times L},
% \nonumber
%     \bar{\Phi}_\mathrm{u,t}&=[\bar{\phi}_\mathrm{u,t}(\omega_1), \dots ,\bar{\phi}_\mathrm{u,t}(\omega_L)]\in \mathbb{C}^{n_\phi \times L},\\
%     \bar{\Phi}_\mathrm{w,err}&=[\bar{\phi}_\mathrm{w,err}(\omega_1), \dots ,\bar{\phi}_\mathrm{w,err}(\omega_L)]\in \mathbb{C}^{n_\phi \times L},
\end{align}
where $\ast \in \{\mathrm{w};\mathrm{u};\mathrm{u,d};\mathrm{e,t} \}$, which satisfies
\begin{align}\label{eq:phibarcompact}
\nonumber
    \bar{\Phi} & = \underbrace{V_\mathrm{tr}U_\mathrm{e}  + \bar{\Phi}_\mathrm{u,t}}_{=:\bar{\Phi}_\mathrm{u}}  + \bar{\Phi}_\mathrm{w},\\
    V_{\mathrm{tr}}&:=[V_{1},\cdots,V_{L}] \in \mathbb{C}^{n_\mathrm{\phi}\times n_\mathrm{u}L}.
\end{align}
% \begin{align}\label{eq:VYW}
% \nonumber
%  Y_\mathrm{tr}&:=[Y_{1},\cdots,Y_{L}] \in \mathbb{C}^{n_\mathrm{\phi}\times n_\mathrm{x}L},\\
%  \bar{W}&:=\mathrm{diag}(\bar{w}(\omega_1),\dots,\bar{w}(\omega_L))\in \mathbb{C}^{n_\mathrm{x}L \times L}.
% \end{align}

The following lemma provides a lower bound on $\Phi \Phi^\top$ in terms of the spectral lines of $\phi_k$.
\begin{lemma}\label{lem:phiphitight} For any $\epsilon \in (0,1)$, $\phi_k$ satisfies
\begin{align}\label{eq:phiphitight}
  \frac{1}{T}\Phi \Phi^\top\succeq & (1-\epsilon)V_\mathrm{tr}U_\mathrm{e}U_\mathrm{e}^\top V_\mathrm{tr}^\mathsf{H}-\tfrac{2(1-\epsilon)}{\epsilon}(\bar{\Phi}_\mathrm{u,t}\bar{\Phi}_\mathrm{u,t}^\mathsf{H} +\bar{\Phi}_\mathrm{w}\bar{\Phi}_\mathrm{w}^\mathsf{H}).
\end{align}
\end{lemma}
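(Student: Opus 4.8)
The plan is to bridge the time-domain excitation Gram matrix $\tfrac{1}{T}\Phi\Phi^\top=\tfrac{1}{T}\sum_{k=0}^{T-1}\phi_k\phi_k^\top$ and the frequency-domain spectral lines $\bar{\phi}(\omega)$ through a matrix-valued Parseval identity, and then to discard the unwanted frequency content (which only helps a lower bound) before splitting the retained spectral lines into the desired steady-state part $V_\mathrm{tr}U_\mathrm{e}$ and the error parts $\bar{\Phi}_\mathrm{u,t},\bar{\Phi}_\mathrm{w}$.

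First I would establish the exact identity
\begin{align}
\tfrac{1}{T}\Phi\Phi^\top=\sum_{\omega\in\Omega_T}\bar{\phi}(\omega)\bar{\phi}(\omega)^\mathsf{H}.
\end{align}
This follows from the inverse-DFT representation $\phi_k=\sum_{\omega\in\Omega_T}\bar{\phi}(\omega)e^{j2\pi\omega k}$ implied by \eqref{eq:spectral_amplitude}, together with the orthogonality relation $\sum_{k=0}^{T-1}e^{j2\pi(\omega-\omega')k}=T$ if $\omega=\omega'$ and $0$ otherwise for $\omega,\omega'\in\Omega_T$; substituting and using $\phi_k^\top=\phi_k^\mathsf{H}$ (as $\phi_k$ is real) collapses the double sum to the single sum above. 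Because every summand $\bar{\phi}(\omega)\bar{\phi}(\omega)^\mathsf{H}\succeq0$, keeping only the $L$ excited frequencies $\omega_1,\dots,\omega_L$ yields the positive-semidefinite lower bound
\begin{align}
\tfrac{1}{T}\Phi\Phi^\top\succeq\sum_{i=1}^{L}\bar{\phi}(\omega_i)\bar{\phi}(\omega_i)^\mathsf{H}=\bar{\Phi}\bar{\Phi}^\mathsf{H}.
\end{align}

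Next I would lower-bound $\bar{\Phi}\bar{\Phi}^\mathsf{H}$ by two completions of squares. Writing $\bar{\Phi}=V_\mathrm{tr}U_\mathrm{e}+E$ with $E:=\bar{\Phi}_\mathrm{u,t}+\bar{\Phi}_\mathrm{w}$ from \eqref{eq:phibarcompact}, the inequality $(\sqrt{\epsilon}\,V_\mathrm{tr}U_\mathrm{e}+\tfrac{1}{\sqrt{\epsilon}}E)(\sqrt{\epsilon}\,V_\mathrm{tr}U_\mathrm{e}+\tfrac{1}{\sqrt{\epsilon}}E)^\mathsf{H}\succeq0$ bounds the cross terms and gives, for any $\epsilon\in(0,1)$,
\begin{align}
\bar{\Phi}\bar{\Phi}^\mathsf{H}\succeq(1-\epsilon)V_\mathrm{tr}U_\mathrm{e}U_\mathrm{e}^\top V_\mathrm{tr}^\mathsf{H}-\tfrac{1-\epsilon}{\epsilon}EE^\mathsf{H},
\end{align}
where I used that $U_\mathrm{e}$ is real, so $(V_\mathrm{tr}U_\mathrm{e})^\mathsf{H}=U_\mathrm{e}^\top V_\mathrm{tr}^\mathsf{H}$. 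A second completion, $(\bar{\Phi}_\mathrm{u,t}-\bar{\Phi}_\mathrm{w})(\bar{\Phi}_\mathrm{u,t}-\bar{\Phi}_\mathrm{w})^\mathsf{H}\succeq0$, yields $EE^\mathsf{H}\preceq2(\bar{\Phi}_\mathrm{u,t}\bar{\Phi}_\mathrm{u,t}^\mathsf{H}+\bar{\Phi}_\mathrm{w}\bar{\Phi}_\mathrm{w}^\mathsf{H})$; chaining the three displays reproduces exactly \eqref{eq:phiphitight}.

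The main obstacle is the first step: making the Parseval identity rigorous while accounting for the fact that, unlike the input, the state and disturbance responses have spectral content spread over all of $\Omega_T$, so that $\bar{\phi}(\omega)\neq0$ even away from $\omega_1,\dots,\omega_L$. The observation that legitimizes the whole argument is that these off-target contributions enter $\tfrac{1}{T}\Phi\Phi^\top$ only through positive-semidefinite terms, which may be dropped without invalidating the lower bound; the remaining manipulations are the two standard Young-type matrix inequalities, where the only care needed is the sign bookkeeping that produces the factors $(1-\epsilon)$ and $\tfrac{2(1-\epsilon)}{\epsilon}$.
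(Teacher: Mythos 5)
Your proposal is correct and takes essentially the same route as the paper's proof: the matrix Parseval--Plancherel identity, dropping the positive-semidefinite contributions of the non-excited frequencies to get $\tfrac{1}{T}\Phi\Phi^\top \succeq \bar{\Phi}\bar{\Phi}^\mathsf{H}$, and then two Young-type matrix inequalities with the same split $\bar{\Phi}=V_\mathrm{tr}U_\mathrm{e}+(\bar{\Phi}_\mathrm{u,t}+\bar{\Phi}_\mathrm{w})$. The only difference is cosmetic: the paper cites Young's inequality where you derive it inline by completion of squares, and you additionally spell out the DFT orthogonality argument behind the Parseval step.
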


\ifbool{arxivversion}{The proof of Lemma \ref{lem:phiphitight} may be found in Appendix \ref{app:lem_phiphitight}.}{The proof of Lemma \ref{lem:phiphitight} may be found in \cite[Appendix IV]{venkatasubramanian2025beyond}.} The result in Lemma \ref{lem:phiphitight} depends on the transient error due to the input and the effect of the disturbance, which is unknown. Therefore, in what follows, we derive suitable bounds.

\subsection{Bounds on transient error and the effect of disturbance}\label{sec:te_dist}
Due to bounded inputs and strict stability, the transient error $\bar{x}_\mathrm{u,t}(\omega_i)$ decays uniformly to zero as $T \to \infty$. Let us denote the impulse response of the system with respect to the input $u_k$ as $G(i)=A_\mathrm{tr}^{i-1}B_\mathrm{tr}$. The following lemma provides a deterministic bound on the transient error for finite $T$.
% Note that \eqref{eq:sys} can be written as
% \begin{align}
%     x_k=\sum_{i=1}^k \underbrace{A_\mathrm{tr}^{i-1}B_\mathrm{tr}}_{=:G(i)} u_{k-i} + \sum_{i=1}^k A_\mathrm{tr}^{i-1} w_{k-i}
% \end{align}
% where $G(i)$ is the impulse response of the system with respect to the input $u_k$.

\begin{lemma}\label{lem:transient} \cite[Theorem 2.1]{ljung1999sysid} Let Assumption \ref{a3} hold. We have
\begin{align}\label{eq:norm_phiuerr}
    \sup_{\omega_i \in \Omega_T}\|\bar{x}_{\mathrm{u,t}}(\omega_i)\| \overset{\eqref{eq:min_energy_cost}}{\leq} \frac{2 \gamma_\mathrm{e} G_\mathrm{tr}}{\sqrt{T}}%\frac{C_{\mathrm{wir,u}}}{\sqrt{T}}
\end{align}
where $G_\mathrm{tr} = \sum_{i=1}^\infty i \|G(i)\| < \infty$.    
\end{lemma}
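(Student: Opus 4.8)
The plan is to obtain an explicit expression for the transient error $\bar{x}_{\mathrm{u,t}}(\omega_i)$ and then bound it term by term. Using the physically correct zero-initial-condition convolution $x_k^{\mathrm{u}}=\sum_{m=1}^{k}G(m)u_{k-m}$ with $G(m)=A_\mathrm{tr}^{m-1}B_\mathrm{tr}$ (cf.\ \eqref{eq:sys}), I would insert this into the spectral-line definition \eqref{eq:spectral_amplitude} and exchange the order of summation over $k$ and $m$ to obtain $\bar{x}_{\mathrm{u}}(\omega_i)=\sum_{m=1}^{T-1}G(m)e^{-j2\pi\omega_i m}\,\tfrac1T\sum_{n=0}^{T-1-m}u_n e^{-j2\pi\omega_i n}$. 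Comparing this against the steady-state response $\bar{x}_{\mathrm{u,d}}(\omega_i)=\big(\sum_{m=1}^{\infty}G(m)e^{-j2\pi\omega_i m}\big)\bar u(\omega_i)$ from \eqref{eq:phibar}, with $\bar u(\omega_i)=\tfrac1T\sum_{n=0}^{T-1}u_n e^{-j2\pi\omega_i n}$, the difference $\bar{x}_{\mathrm{u,t}}(\omega_i)=\bar{x}_{\mathrm{u}}(\omega_i)-\bar{x}_{\mathrm{u,d}}(\omega_i)$ splits into exactly two pieces: a boundary/leakage term $-\sum_{m=1}^{T-1}G(m)e^{-j2\pi\omega_i m}\tfrac1T\sum_{n=T-m}^{T-1}u_n e^{-j2\pi\omega_i n}$ arising from the finite observation window, and an impulse-response tail term $-\sum_{m=T}^{\infty}G(m)e^{-j2\pi\omega_i m}\bar u(\omega_i)$.

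Next I would bound the two pieces separately, keeping every estimate independent of $\omega_i$ (so the supremum over $\omega_i\in\Omega_T$ is harmless) via $|e^{-j2\pi\omega_i n}|=1$ and the triangle inequality. For the leakage term, the inner sum has $m$ consecutive indices, so Cauchy--Schwarz gives $\big\|\sum_{n=T-m}^{T-1}u_n e^{-j2\pi\omega_i n}\big\|\le\sqrt{m}\,\big(\sum_{n=T-m}^{T-1}\|u_n\|^2\big)^{1/2}\le\sqrt{m}\,\|U\|$, and the energy bound \eqref{eq:gammau}--\eqref{eq:upars} yields $\|U\|\le\sqrt{T}\,\gamma_\mathrm{e}$. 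Relaxing $\sqrt{m}\le m$ then collapses the series into $G_\mathrm{tr}=\sum_{m=1}^{\infty}m\|G(m)\|$, producing the bound $\tfrac{\gamma_\mathrm{e}G_\mathrm{tr}}{\sqrt{T}}$. For the tail term, $\|\bar u(\omega_i)\|\le\gamma_\mathrm{e}$ by \eqref{eq:upars}, and since $m\ge T$ implies $1\le m/T$, one gets $\sum_{m\ge T}\|G(m)\|\le\tfrac1T\sum_{m\ge T}m\|G(m)\|\le\tfrac{G_\mathrm{tr}}{T}$, hence a contribution of at most $\tfrac{\gamma_\mathrm{e}G_\mathrm{tr}}{T}\le\tfrac{\gamma_\mathrm{e}G_\mathrm{tr}}{\sqrt{T}}$. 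Adding the two contributions gives exactly $\tfrac{2\gamma_\mathrm{e}G_\mathrm{tr}}{\sqrt{T}}$. Finiteness of $G_\mathrm{tr}$ follows from Assumption \ref{a3}: Schur stability of $A_\mathrm{tr}$ gives $\|G(m)\|\le c\,\rho^{m-1}\|B_\mathrm{tr}\|$ for some $\rho\in(0,1)$ and $c>0$, so $\sum_m m\|G(m)\|<\infty$.

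The step I expect to be the main obstacle is the Cauchy--Schwarz trade-off in the leakage term. It is what removes any $\sqrt{L}$ dependence on the number of excited frequencies, since it works directly with the aggregate energy $\|U\|$ rather than the individual amplitudes $\bar u(\omega_i)$, and it simultaneously fixes the $1/\sqrt{T}$ rate; a crude peak bound $\|u_n\|\le\sum_{i}\|\bar u(\omega_i)\|$ would instead give an $L$-dependent constant. Care is also needed so that the window $[T-m,\,T-1]$ is controlled uniformly for all $m\le T-1$, and so that the factor $2$ emerges precisely from balancing the leakage and tail contributions. Tracking the spectral-amplitude convention that relates $\|U\|^2$, $\sum_{i}\|\bar u(\omega_i)\|^2$, and the cosine amplitudes in \eqref{eq:exploration_controller} is the only remaining place where the constant could otherwise drift.
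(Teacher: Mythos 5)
Your proof is correct, and it is worth noting that the paper itself contains no proof of Lemma \ref{lem:transient}: the statement is justified purely by the citation to \cite[Theorem 2.1]{ljung1999sysid}. Your argument is, in essence, the proof of that cited theorem --- write the spectral line of the convolution as the frequency response times the input spectral line plus a window-mismatch remainder, and weight the remainder by $m\|G(m)\|$ to obtain $G_\mathrm{tr}$ --- but adapted in three ways that the bare citation does not literally cover: to vector-valued signals, to zero initial conditions (so the mismatch consists only of the right-edge leakage term plus the impulse-response tail, which is where your factor $2$ arises, rather than from the two-sided boundary in Ljung's setting), and, most importantly, to energy-bounded inputs. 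Ljung's theorem is stated for peak-bounded inputs $\max_k \|u_k\| \leq C_u$, whereas for the multi-sine \eqref{eq:exploration_controller} the energy bound \eqref{eq:gammau}/\eqref{eq:upars} only yields the peak bound $C_u \leq \sqrt{L}\,\gamma_\mathrm{e}$; your Cauchy--Schwarz estimate on the leakage window works instead with the aggregate energy $\|U\| \leq \sqrt{T}\gamma_\mathrm{e}$ and therefore delivers the $L$-independent constant $2\gamma_\mathrm{e}G_\mathrm{tr}/\sqrt{T}$ exactly as the lemma states it, so your reconstruction actually closes the small gap between the citation and the claimed inequality. For completeness, the peak-bound route is not unusable here: under the paper's $1/T$ spectral normalization it gives a remainder of order $\sqrt{L}\,\gamma_\mathrm{e}G_\mathrm{tr}/T$, which also implies \eqref{eq:norm_phiuerr} since $L \leq T$; but your energy-based argument is self-contained, matches the stated constant, and even shows the tail contribution is $O(1/T)$, so the bound could be sharpened to $\bigl(1+1/\sqrt{T}\bigr)\gamma_\mathrm{e}G_\mathrm{tr}/\sqrt{T}$, or to one with $\sum_{m\geq 1}\sqrt{m}\,\|G(m)\|$ in place of $G_\mathrm{tr}$ in the leading term.
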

Since $A_\mathrm{tr}$ is Schur stable (cf. Assumption \ref{a3}), there exist constants $C \geq 1$ and $\rho \in (0,1)$ such that $\|A_\mathrm{tr}^k\| \leq C \rho^k$, $\forall~k~\in~\mathbb{N}$. From the convergence of arithmetico-geometric series, we have 
\begin{align}\label{eq:Gtr}G_\mathrm{tr}=\sum_{k=1}^\infty k \|A_\mathrm{tr}^k B_\mathrm{tr}\| \leq \|B_\mathrm{tr}\| \sum_{k=1}^\infty k C\rho^k
  = \dfrac{\|B_\mathrm{tr}\| C\rho}{(1-\rho)^2}.
\end{align}

% In Appendix \ref{app:scenarioGstar}, we show how to compute $g_\star$ such that $\|G\|_\star \leq g_\star$ using $\theta_\mathrm{tr} \in \Theta_0$ (cf. Assumption \ref{a2}). 
Using \eqref{eq:phibar} and \eqref{eq:barphiw_redef}, we can write $\bar{\Phi}_\mathrm{w}$ as 
\begin{align}\label{eq:Phiwcompact}
    \bar{\Phi}_\mathrm{w}=\begin{bmatrix}   F_{x,\omega_1}A_\mathrm{w}W&\cdots &F_{x,\omega_L}A_\mathrm{w}W\\
    0 & \cdots & 0
    \end{bmatrix}.
\end{align}
Given that $W \sim \mathsf{subG}_{Tn_\mathrm{x}}(\sigma_\mathrm{w}^2I_{Tn_\mathrm{x}})$ with zero mean, \cite[Corollary 1]{ao2025stochastic} implies
\begin{align}\label{eq:Wnorm}
    \mathbb{P} \bigg[ \|W\|^2 \leq \underbrace{\sigma_\mathrm{w}^2 \left((1 + \log 4)T n_\mathrm{x} + 4 \log \delta^{-1}\right)}_{=:\gamma_\mathrm{w}}\bigg] \geq 1-\delta.
\end{align}
% We can compute $\gamma_\mathrm{A_w}$ such that
% \begin{align}\label{eq:Awnorm}
%     \|A_\mathrm{w}\|^2 \leq \gamma_\mathrm{A_w},
% \end{align}
% using scenario methods based on $\theta_\mathrm{tr} \in \mathbf{\Theta}_0$ (cf. Assumption \ref{a2}) as shown in Appendix \ref{app:scenarioGstar}.

\begin{lemma}\label{lem:Phiuerr_Phiw} The matrices $\bar{\Phi}_\mathrm{u,t}$ and $\bar{\Phi}_\mathrm{w}$ satisfy
\begin{align}
\label{eq:phiuerr_bound}
\bar{\Phi}_\mathrm{u,t} \bar{\Phi}_\mathrm{u,t}^\mathsf{H} \preceq\frac{ 4 L \gamma_\mathrm{e}^2}{T} \left(\dfrac{\|B_\mathrm{tr}\| C\rho}{(1-\rho)^2}\right)^2I_{n_\phi} = : \Gamma_\mathrm{u}(\gamma_\mathrm{e}^2),\\
 \label{eq:Gammaw}    \mathbb{P} \left[\bar{\Phi}_\mathrm{w}\bar{\Phi}_\mathrm{w}^\mathsf{H} \preceq \left(\tfrac{L}{T}\|A_\mathrm{w}\|^2 \gamma_\mathrm{w}\right) I_{n_\phi}=:\Gamma_\mathrm{w}\right] \geq 1-\delta.
\end{align}
\end{lemma}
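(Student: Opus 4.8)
The plan is to treat both inequalities with the same elementary positive-semidefinite argument, exploiting that each of $\bar{\Phi}_\mathrm{u,t}$ and $\bar{\Phi}_\mathrm{w}$ is a matrix with exactly $L$ columns. For any matrix $M=[m_1,\dots,m_L]\in\mathbb{C}^{n_\phi\times L}$ we have the rank-one decomposition $MM^\mathsf{H}=\sum_{i=1}^{L}m_i m_i^\mathsf{H}$, and since each term satisfies $m_i m_i^\mathsf{H}\preceq \|m_i\|^2 I_{n_\phi}$ (the matrix $m_i m_i^\mathsf{H}$ has a single nonzero eigenvalue $\|m_i\|^2$), it follows that
\begin{align*}
MM^\mathsf{H}\preceq\Big(\sum_{i=1}^{L}\|m_i\|^2\Big)I_{n_\phi}\preceq L\,\big(\sup_i\|m_i\|^2\big)\,I_{n_\phi}.
\end{align*}
Thus for each claim it suffices to bound the largest column norm of the corresponding matrix and multiply by $L$.

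For the first bound, the columns of $\bar{\Phi}_\mathrm{u,t}$ are $\bar{\phi}_\mathrm{u,t}(\omega_i)=[\bar{x}_\mathrm{u,t}(\omega_i)^\top,\,0]^\top$, so $\|\bar{\phi}_\mathrm{u,t}(\omega_i)\|=\|\bar{x}_\mathrm{u,t}(\omega_i)\|$ because the lower $n_\mathrm{u}$ block vanishes. I would then invoke Lemma~\ref{lem:transient} to get $\sup_{\omega_i\in\Omega_T}\|\bar{x}_\mathrm{u,t}(\omega_i)\|\le 2\gamma_\mathrm{e}G_\mathrm{tr}/\sqrt{T}$, and substitute the explicit bound $G_\mathrm{tr}\le \|B_\mathrm{tr}\|C\rho/(1-\rho)^2$ from \eqref{eq:Gtr}. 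Squaring gives $\sup_i\|\bar{\phi}_\mathrm{u,t}(\omega_i)\|^2\le \tfrac{4\gamma_\mathrm{e}^2}{T}\big(\|B_\mathrm{tr}\|C\rho/(1-\rho)^2\big)^2$, and multiplying by $L$ via the generic estimate above yields exactly $\Gamma_\mathrm{u}(\gamma_\mathrm{e}^2)$.

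For the second bound, I would read off from \eqref{eq:Phiwcompact} that the columns of $\bar{\Phi}_\mathrm{w}$ are $\bar{\phi}_\mathrm{w}(\omega_i)=[(F_{x,\omega_i}A_\mathrm{w}W)^\top,\,0]^\top$, whence $\|\bar{\phi}_\mathrm{w}(\omega_i)\|=\|F_{x,\omega_i}A_\mathrm{w}W\|\le \|F_{x,\omega_i}\|\,\|A_\mathrm{w}\|\,\|W\|$ by submultiplicativity. Using $\|F_{x,\omega_i}\|^2=1/T$ from \eqref{eq:supFwi} gives $\sup_i\|\bar{\phi}_\mathrm{w}(\omega_i)\|^2\le \tfrac{1}{T}\|A_\mathrm{w}\|^2\|W\|^2$, and applying the generic estimate produces $\bar{\Phi}_\mathrm{w}\bar{\Phi}_\mathrm{w}^\mathsf{H}\preceq \tfrac{L}{T}\|A_\mathrm{w}\|^2\|W\|^2\,I_{n_\phi}$ deterministically. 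Finally I would condition on the event $\{\|W\|^2\le \gamma_\mathrm{w}\}$, which by \eqref{eq:Wnorm} has probability at least $1-\delta$; on this event $\|W\|^2$ may be replaced by $\gamma_\mathrm{w}$, yielding $\Gamma_\mathrm{w}$ and establishing the stated high-probability inequality. There is no genuine obstacle here: the only point requiring care is that the second statement is probabilistic, so the column-norm bound must be combined with the concentration event \eqref{eq:Wnorm} rather than held deterministically; everything else is the routine rank-one semidefinite estimate applied to the explicit column expressions.
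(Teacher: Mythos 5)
Your proposal is correct and follows essentially the same route as the paper's proof: the rank-one column estimate $MM^\mathsf{H}\preceq L\,(\sup_i\|m_i\|^2)I_{n_\phi}$ combined with Lemma~\ref{lem:transient} and \eqref{eq:Gtr} for the transient term, and with sub-multiplicativity, \eqref{eq:supFwi}, and the concentration event \eqref{eq:Wnorm} for the disturbance term. The only (cosmetic) difference is that you spell out the rank-one decomposition and the deterministic-then-conditional structure of the probabilistic bound, which the paper leaves implicit.
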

\ifbool{arxivversion}{The proof of Lemma \ref{lem:Phiuerr_Phiw} may be found in Appendix \ref{app:lem_Phiuerr_Phiw}.}{The proof of Lemma \ref{lem:Phiuerr_Phiw} may be found in \cite[Appendix V]{venkatasubramanian2025beyond}.}

% The following lemma presents a clear relationship between the spectral content of the signal $\phi_k$ and finite excitation $\sum_{k=0}^{T-1}\phi_k \phi_k^\top$.
%  \begin{lemma} \cite[Lemma 2]{venkatasubramanian2023sequential} \label{lemma:uncertainty bound spectral} For any $\epsilon \in (0,1)$, $\phi_k$ satisfies
% \begin{equation}\label{eq:lem3}
% ((\Phi \Phi^\top) \otimes I_{n_\mathrm{x}}) \succeq \frac{T}{L} \left(  (1-\epsilon) \bar{\Phi}_\mathrm{u} \bar{\Phi}_\mathrm{u}^\mathsf{H} - \left(\frac{1-\epsilon}{\epsilon}\right) \bar{\Phi}_\mathrm{w} \bar{\Phi}_\mathrm{w}^\mathsf{H}  \right).
% \end{equation}
% \end{lemma}

\subsection{Convex relaxation}\label{sec:convexrel}
% The following lemma is used to make Inequality \eqref{eq:phiphitight} linear in the decision variable $U_\mathrm{e}$.
% \begin{lemma}\cite[Lemma 12]{venkatasubramanian2024robust}\label{lem:convexrel1} For any matrices $\tilde{U} \in \mathbb{R}^{Ln_\mathrm{u} \times L}$ and $U_\mathrm{e} \in \mathbb{R}^{Ln_\mathrm{u} \times L}$, we have
% \begin{align}\label{eq:conv_rel}
% V_\mathrm{tr} U_\mathrm{e} U_\mathrm{e}^\top {V_\mathrm{tr}}^\mathsf{H}  \succeq V_\mathrm{tr}\left( U_\mathrm{e} \tilde{U}^\top +  \tilde{U} U_\mathrm{e}^\top - \tilde{U} \tilde{U}^\top \right)V_\mathrm{tr}^\mathsf{H}.
% \end{align}
% \end{lemma}
The following proposition provides a sufficient condition linear in $U_\mathrm{e}$.
% \begin{proposition}\label{prop:penum_suffcond}
% Let Assumptions \ref{a1}, \ref{a3} and \ref{a2} hold. Suppose matrices $U_\mathrm{e}$, $\tilde{U}$ and $\Phi$ satisfy
% \begin{align}\label{eq:penum_suffcond}
% \nonumber
%     (1-\epsilon) V_\mathrm{tr}( U_\mathrm{e} \tilde{U}^\top +  \tilde{U} U_\mathrm{e}^\top - \tilde{U} \tilde{U}^\top) V_\mathrm{tr}^\mathsf{H} & \\
% \nonumber
%     -  \tfrac{2(1-\epsilon)}{\epsilon} \Gamma_\mathrm{u}(\gamma_\mathrm{e}^2)+ \tfrac{\lambda}{T} I_{n_\phi} - \tfrac{2(1-\epsilon)}{\epsilon} \Gamma_\mathrm{w}& \\ 
%     -\tfrac{1}{T} \big( 2C_1 + 8 \sigma_\mathrm{w}^2 \log(\det(\bar{D}_T)) + 2 \lambda \bar{\theta}^2 \big) D_\mathrm{des} & \succeq 0.
% \end{align}
% Then, an estimate $\hat{\theta}_T$ computed as in \eqref{eq:reg_mean_est} upon the application of the input \eqref{eq:exploration_controller} satisfies the exploration goal \eqref{eq:exp_goal2} with probability at least $1-2 \delta$.
% \end{proposition}
\begin{proposition}\label{prop:penum_suffcond}
Let Assumptions \ref{a1}, \ref{a3} and \ref{a2} hold. Suppose matrices $U_\mathrm{e}$, $\tilde{U}$ and $\Phi$ satisfy
\begin{align}\label{eq:penum_suffcond}
\nonumber
    (1-\epsilon) V_\mathrm{tr}( U_\mathrm{e} \tilde{U}^\top +  \tilde{U} U_\mathrm{e}^\top - \tilde{U} \tilde{U}^\top) V_\mathrm{tr}^\mathsf{H} & \\
\nonumber
    -  \tfrac{2(1-\epsilon)}{\epsilon} \Gamma_\mathrm{u}(\gamma_\mathrm{e}^2)+ \tfrac{\lambda}{T} I_{n_\phi} - \tfrac{2(1-\epsilon)}{\epsilon} \Gamma_\mathrm{w}& \\ 
    -\tfrac{1}{T} \big( 2C_1 + \JV{2 n_\mathrm{x}} \sigma_\mathrm{w}^2 \log(\det(\bar{D}_T)) + 2 \lambda \bar{\theta}^2 \big) D_\mathrm{des} & \succeq 0.
\end{align}
Then, an estimate $\hat{\theta}_T$ computed as in \eqref{eq:reg_mean_est} upon the application of the input \eqref{eq:exploration_controller} satisfies the exploration goal \eqref{eq:exp_goal2} with probability at least $1-2 \delta$.
\end{proposition}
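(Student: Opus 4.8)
The plan is to show that the LMI \eqref{eq:penum_suffcond}, on a suitable high-probability event, implies the sufficient condition \eqref{eq:suffcond_prop} of Proposition \ref{prop:RLS_suffcond_RDTbound}, which already certifies the exploration goal \eqref{eq:exp_goal2}. The argument chains together the deterministic lower bound of Lemma \ref{lem:phiphitight}, the bounds of Lemma \ref{lem:Phiuerr_Phiw}, and a convexifying relaxation of the quadratic excitation term, and finishes with a union bound over two $\delta$-events. First I would start from Lemma \ref{lem:phiphitight} and substitute the bounds from Lemma \ref{lem:Phiuerr_Phiw}: the transient term obeys $\bar{\Phi}_\mathrm{u,t}\bar{\Phi}_\mathrm{u,t}^\mathsf{H} \preceq \Gamma_\mathrm{u}(\gamma_\mathrm{e}^2)$ deterministically, while $\bar{\Phi}_\mathrm{w}\bar{\Phi}_\mathrm{w}^\mathsf{H} \preceq \Gamma_\mathrm{w}$ holds on an event $\mathcal{E}_\mathrm{w}$ of probability at least $1-\delta$. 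On $\mathcal{E}_\mathrm{w}$ this yields the deterministic matrix lower bound
\begin{align*}
\tfrac{1}{T}\Phi\Phi^\top \succeq (1-\epsilon)V_\mathrm{tr}U_\mathrm{e}U_\mathrm{e}^\top V_\mathrm{tr}^\mathsf{H} - \tfrac{2(1-\epsilon)}{\epsilon}\big(\Gamma_\mathrm{u}(\gamma_\mathrm{e}^2)+\Gamma_\mathrm{w}\big).
\end{align*}

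The next step is the convex relaxation that removes the quadratic dependence on $U_\mathrm{e}$. Using the trivial positivity $(U_\mathrm{e}-\tilde{U})(U_\mathrm{e}-\tilde{U})^\top \succeq 0$, which rearranges to $U_\mathrm{e}U_\mathrm{e}^\top \succeq U_\mathrm{e}\tilde{U}^\top + \tilde{U}U_\mathrm{e}^\top - \tilde{U}\tilde{U}^\top$, and a congruence transformation by $V_\mathrm{tr}$ (which preserves the semidefinite ordering), I obtain the linear lower bound $V_\mathrm{tr}U_\mathrm{e}U_\mathrm{e}^\top V_\mathrm{tr}^\mathsf{H} \succeq V_\mathrm{tr}(U_\mathrm{e}\tilde{U}^\top + \tilde{U}U_\mathrm{e}^\top - \tilde{U}\tilde{U}^\top)V_\mathrm{tr}^\mathsf{H}$. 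Substituting this into the previous display and then adding $\tfrac{\lambda}{T}I_{n_\phi}$ and subtracting $\tfrac{1}{T}(2C_1 + 8\sigma_\mathrm{w}^2\log(\det(\bar{D}_T)) + 2\lambda\bar{\theta}^2)D_\mathrm{des}$, the assumed LMI \eqref{eq:penum_suffcond} is exactly what guarantees, after scaling by $T$,
\begin{align*}
\Phi\Phi^\top + \lambda I_{n_\phi} - \big(2C_1 + 8\sigma_\mathrm{w}^2\log(\det(\bar{D}_T)) + 2\lambda\bar{\theta}^2\big)D_\mathrm{des} \succeq 0,
\end{align*}
that is, condition \eqref{eq:suffcond_prop} holds for the realized $\Phi$ on the event $\mathcal{E}_\mathrm{w}$.

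Finally I would invoke Proposition \ref{prop:RLS_suffcond_RDTbound}: whenever \eqref{eq:suffcond_prop} holds, the estimate achieves \eqref{eq:exp_goal2} on an event $\mathcal{E}_\theta$ of probability at least $1-\delta$, stemming from the self-normalized martingale bound of Lemma \ref{lem:self-norm-bound-vector} through the confidence ellipsoid of Theorem \ref{thm:RLS_ellipsoid}. Since \eqref{eq:suffcond_prop} is certified on $\mathcal{E}_\mathrm{w}$, the goal holds on $\mathcal{E}_\mathrm{w}\cap\mathcal{E}_\theta$, and a union bound gives $\mathbb{P}[\mathcal{E}_\mathrm{w}\cap\mathcal{E}_\theta]\geq 1-2\delta$. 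The main obstacle is the probability bookkeeping: one must treat the disturbance-energy bound \eqref{eq:Gammaw} and the confidence-ellipsoid bound as two genuinely distinct sources of failure, each costing $\delta$, and justify that the union bound yields $1-2\delta$ while the purely algebraic steps (Lemma \ref{lem:phiphitight}, the relaxation, and the congruence by the complex matrix $V_\mathrm{tr}$) remain deterministic on $\mathcal{E}_\mathrm{w}$.
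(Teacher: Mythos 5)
Your proposal is correct and follows essentially the same two-part structure as the paper's own proof: a deterministic algebraic chain combining Lemma \ref{lem:phiphitight}, the bounds of Lemma \ref{lem:Phiuerr_Phiw}, and the convex relaxation $V_\mathrm{tr}U_\mathrm{e}U_\mathrm{e}^\top V_\mathrm{tr}^\mathsf{H} \succeq V_\mathrm{tr}(U_\mathrm{e}\tilde{U}^\top + \tilde{U}U_\mathrm{e}^\top - \tilde{U}\tilde{U}^\top)V_\mathrm{tr}^\mathsf{H}$ to reduce \eqref{eq:penum_suffcond} to condition \eqref{eq:suffcond_prop} of Proposition \ref{prop:RLS_suffcond_RDTbound}, followed by the same union-bound bookkeeping over the confidence-ellipsoid event and the disturbance-energy event \eqref{eq:Wnorm} to obtain $1-2\delta$. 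The only cosmetic difference is that you prove the relaxation inline via $(U_\mathrm{e}-\tilde{U})(U_\mathrm{e}-\tilde{U})^\top \succeq 0$, whereas the paper cites it as an external lemma.
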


\ifbool{arxivversion}{The proof of Proposition \ref{prop:penum_suffcond} may be found in Appendix \ref{app:prop_penum_suffcond}.}{The proof of Proposition \ref{prop:penum_suffcond} may be found in \cite[Appendix VI]{venkatasubramanian2025beyond}.} Note that Inequality \eqref{eq:penum_suffcond} in Proposition \ref{prop:penum_suffcond} requires an upper bound on the exploration trajectory data $\bar{D}_T$ (cf. \eqref{eq:reg_mean_est}) in order to derive an LMI for exploration.

\subsection{Upper bound on excitation and $\bar{D}_T$}\label{sec:ubDT}
 
In the following lemma, we provide an upper bound on the term $\log(\det(\bar{D}_T))$.

% Similar to $\gamma_\mathrm{A_w}$ \eqref{eq:Awnorm}, we can compute $\gamma_\mathrm{A_u}$ such that
% \begin{align}\label{eq:Aunorm}
%     \|A_\mathrm{u}\|^2 \leq \gamma_\mathrm{A_u},
% \end{align}
% using scenario methods based on $\theta_\mathrm{tr} \in \mathbf{\Theta}_0$ (cf. Assumption \ref{a2}) as shown in Appendix \ref{app:scenarioGstar}.

\begin{lemma}\label{lem:logDT} Let $S_{\textnormal{energy-bound-1}}(\gamma_\mathrm{e},U_\mathrm{e}) \succeq 0$ in \eqref{eq:min_energy_cost} and $\|W\|^2 \leq \gamma_\mathrm{w}$ hold. Then, the matrix $\bar{D}_T$ as in \eqref{eq:reg_mean_est} satisfies
\begin{align}\label{eq:logdetDTub}
\nonumber
    & \log(\det(\bar{D}_T))\\
    \leq &\, n_\theta \log (2T\|A_\mathrm{w}\|^2 \gamma_\mathrm{w} + 2T^2 (\|A_\mathrm{u}\|^2+1) \gamma_\mathrm{e}^2 + \lambda).
\end{align}
\end{lemma}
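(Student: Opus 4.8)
The plan is to pass from the determinant to the largest eigenvalue and then control $\|\Phi\|^2$ through the input‑energy and disturbance bounds. Since $\bar{D}_T\in\mathbb{R}^{n_\theta\times n_\theta}$ (recall $n_\theta=n_\mathrm{x}n_\phi$) is positive definite, each of its $n_\theta$ eigenvalues is at most $\lambda_{\max}(\bar{D}_T)=\|\bar{D}_T\|$, so that $\det(\bar{D}_T)=\prod_i\lambda_i(\bar{D}_T)\le\|\bar{D}_T\|^{n_\theta}$, i.e. $\log\det(\bar{D}_T)\le n_\theta\log\|\bar{D}_T\|$. Exploiting the Kronecker structure in \eqref{eq:reg_mean_est}, the eigenvalues of $D_T\otimes I_{n_\mathrm{x}}$ are those of $D_T$ repeated $n_\mathrm{x}$ times, and the additive term $\lambda I$ shifts the whole spectrum by $\lambda$; hence, using \eqref{eq:covar_est}, $\|\bar{D}_T\|=\|D_T\|+\lambda=\|\Phi\Phi^\top\|+\lambda=\|\Phi\|^2+\lambda$. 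It therefore remains to upper bound $\|\Phi\|^2$.

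For this I would use the coarse estimate $\|\Phi\|^2\le\|\Phi\|_F^2=\sum_{k=0}^{T-1}\|\phi_k\|^2\le T\max_{0\le k\le T-1}\|\phi_k\|^2$, where $\|\phi_k\|^2=\|x_k\|^2+\|u_k\|^2$. The input contribution is immediate: by \eqref{eq:upars} (equivalently the hypothesis $S_{\textnormal{energy-bound-1}}(\gamma_\mathrm{e},U_\mathrm{e})\succeq 0$), $\|U\|^2\le T\gamma_\mathrm{e}^2$, so $\|u_k\|^2\le\|U\|^2\le T\gamma_\mathrm{e}^2$ for every $k$. For the state contribution I would use the convolution representation $X=A_\mathrm{u}U+A_\mathrm{w}W$ implied by \eqref{eq:barphiu_redef}--\eqref{eq:barphiw_redef} together with $x_0=0$, so that each $x_k$ with $k\ge 1$ is a block of $X$ and $\|x_k\|^2\le\|X\|^2$. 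Applying $\|a+b\|^2\le 2\|a\|^2+2\|b\|^2$ and submultiplicativity gives $\|X\|^2\le 2\|A_\mathrm{u}\|^2\|U\|^2+2\|A_\mathrm{w}\|^2\|W\|^2\le 2\|A_\mathrm{u}\|^2 T\gamma_\mathrm{e}^2+2\|A_\mathrm{w}\|^2\gamma_\mathrm{w}$, invoking $\|U\|^2\le T\gamma_\mathrm{e}^2$ and the hypothesis $\|W\|^2\le\gamma_\mathrm{w}$.

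Combining the two parts yields $\max_k\|\phi_k\|^2\le 2\|A_\mathrm{u}\|^2 T\gamma_\mathrm{e}^2+2\|A_\mathrm{w}\|^2\gamma_\mathrm{w}+T\gamma_\mathrm{e}^2$, whence $\|\Phi\|^2\le 2\|A_\mathrm{u}\|^2 T^2\gamma_\mathrm{e}^2+2T\|A_\mathrm{w}\|^2\gamma_\mathrm{w}+T^2\gamma_\mathrm{e}^2\le 2T\|A_\mathrm{w}\|^2\gamma_\mathrm{w}+2T^2(\|A_\mathrm{u}\|^2+1)\gamma_\mathrm{e}^2$, the last step merely absorbing $T^2\gamma_\mathrm{e}^2\le 2T^2\gamma_\mathrm{e}^2$. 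Substituting $\|\Phi\|^2+\lambda$ for $\|\bar{D}_T\|$ then reproduces \eqref{eq:logdetDTub}. The step requiring the most care is the bookkeeping of the powers of $T$: bounding each $\|x_k\|^2$ by the \emph{entire} energy $\|X\|^2$ and then multiplying by the $T$ columns of $\Phi$ is exactly what produces the quadratic $T^2$ dependence (a direct Frobenius bound would actually give only a linear $T$, but the stated, slightly looser bound is what is claimed and suffices). The other delicate point is the spectral identity $\|\bar{D}_T\|=\|\Phi\|^2+\lambda$, which hinges on the Kronecker eigenvalue structure rather than on a trace‑based determinant estimate, the latter introducing an unwanted $1/n_\phi$ factor.
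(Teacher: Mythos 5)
Your proof is correct, and it reaches \eqref{eq:logdetDTub} by a genuinely different route than the paper. The paper argues in the frequency domain: setting $L=T$, it uses the Parseval--Plancherel identity $\Phi\Phi^\top = T\sum_{i=1}^{T}\bar{\phi}(\omega_i)\bar{\phi}(\omega_i)^\mathsf{H}$, the decomposition $\bar{\Phi}=\bar{\Phi}_\mathrm{u}+\bar{\Phi}_\mathrm{w}$, the matrix form of Young's inequality $\Phi\Phi^\top\preceq 2T\bigl(\bar{\Phi}_\mathrm{u}\bar{\Phi}_\mathrm{u}^\mathsf{H}+\bar{\Phi}_\mathrm{w}\bar{\Phi}_\mathrm{w}^\mathsf{H}\bigr)$, and column-wise bounds built from $\|F_{x,\omega_i}\|^2=\|F_{u,\omega_i}\|^2=1/T$ \eqref{eq:supFwi}, sub-multiplicativity, $\|U\|^2\le T\gamma_\mathrm{e}^2$ and $\|W\|^2\le\gamma_\mathrm{w}$, arriving at $\bar{D}_T\preceq\bigl(2T\|A_\mathrm{w}\|^2\gamma_\mathrm{w}+2T^2(\|A_\mathrm{u}\|^2+1)\gamma_\mathrm{e}^2+\lambda\bigr)I_{n_\theta}$ before taking $\log\det$. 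You instead stay entirely in the time domain: the same matrices $A_\mathrm{u},A_\mathrm{w}$ enter through the convolution identity $X=A_\mathrm{u}U+A_\mathrm{w}W$ (valid since $x_0=0$), the excitation is controlled via $\|\Phi\|^2\le\|\Phi\|_F^2\le T\max_k\|\phi_k\|^2$ plus the scalar Young inequality, and your closing step, $\log\det(\bar{D}_T)\le n_\theta\log\|\bar{D}_T\|$ with $\|\bar{D}_T\|=\|\Phi\|^2+\lambda$ from the Kronecker spectrum, is essentially the same eigenvalue argument with which the paper concludes. Your version is more elementary (no spectral lines, no complex-valued matrices), and your side remark is accurate: the direct Frobenius accounting $\|\Phi\|_F^2\le\|U\|^2+\|X\|^2$ yields a bound whose input-energy term is linear rather than quadratic in $T$, so the lemma's stated constant is conservative, and your deliberate loosening (replacing $\sum_k$ by $T\max_k$, then $T^2\gamma_\mathrm{e}^2\le 2T^2\gamma_\mathrm{e}^2$) serves only to reproduce its exact form. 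What the paper's frequency-domain route buys is consistency with the rest of Section \ref{sec:suffcond_spectral}: it recycles the spectral-line machinery and the disturbance bound \eqref{eq:Gammaw} already established for Lemma \ref{lem:phiphitight} and Lemma \ref{lem:Phiuerr_Phiw}, rather than introducing a separate time-domain convolution argument.
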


\ifbool{arxivversion}{The proof of Lemma \ref{lem:logDT} may be found in Appendix \ref{app:lem_logDT}.}{The proof of Lemma \ref{lem:logDT} may be found in \cite[Appendix VII]{venkatasubramanian2025beyond}.} Recall $C_1$ from \eqref{eq:C1}.
Henceforth, we define
\begin{align}\label{eq:C2C3}
\nonumber
    C_2:= &\left( 2C_1 + 2 \lambda \bar{\theta}^2 \right),\\
\nonumber  C_3(\gamma_\mathrm{e}^2):= & \JV{2 n_\mathrm{x}} \sigma_\mathrm{w}^2 n_\theta\log (2T\|A_\mathrm{w}\|^2 \gamma_\mathrm{w})\\
& + \JV{2 n_\mathrm{x}} \sigma_\mathrm{w}^2 n_\theta\log ( 2T^2 (\|A_\mathrm{u}\|^2+1) \gamma_\mathrm{e}^2 + \lambda).
\end{align}
% Henceforth, we define
% \begin{align}\label{eq:C2C3}
% \nonumber
%     C_2:= &\left( 2C_1 + 2 \lambda \bar{\theta}^2 \right),\\
% \nonumber  C_3(\gamma_\mathrm{e}^2):= &8 \sigma_\mathrm{w}^2 n_\theta\log (2T\|A_\mathrm{w}\|^2 \gamma_\mathrm{w})\\
% & + 8 \sigma_\mathrm{w}^2 n_\theta\log ( 2T^2 (\|A_\mathrm{u}\|^2+1) \gamma_\mathrm{e}^2 + \lambda).
% \end{align}

\begin{proposition}\label{prop:preLMI}Let Assumptions \ref{a1}, \ref{a3} and \ref{a2} hold. Suppose matrices $U_\mathrm{e}$, $\tilde{U}$ satisfy
\begin{align}\label{eq:preLMI}
    (1-\epsilon) V_\mathrm{tr}\left( U_\mathrm{e} \tilde{U}^\top +  \tilde{U} U_\mathrm{e}^\top - \tilde{U} \tilde{U}^\top \right) V_\mathrm{tr}^\mathsf{H}+ \tfrac{\lambda}{T} I_{n_\phi}  &\\\nonumber
    - \tfrac{2(1-\epsilon)}{\epsilon} \Gamma_\mathrm{u}(\gamma_\mathrm{e}^2) - \tfrac{2(1-\epsilon)}{\epsilon} \Gamma_\mathrm{w}
    -\tfrac{1}{T}\left(C_2 + C_3 (\gamma_\mathrm{e}^2) \right) D_\mathrm{des}&\succeq 0.
\end{align}
Then, an estimate $\hat{\theta}_T$ computed as in \eqref{eq:reg_mean_est} upon the application of the input \eqref{eq:exploration_controller} satisfies the exploration goal \eqref{eq:exp_goal2} with probability at least $1-2 \delta$.
\end{proposition}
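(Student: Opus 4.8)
The plan is to deduce \eqref{eq:preLMI} directly from the condition \eqref{eq:penum_suffcond} of Proposition~\ref{prop:penum_suffcond}: the two matrix inequalities share the same leading (positive) terms and differ only in the scalar coefficient multiplying $D_\mathrm{des}$. In \eqref{eq:penum_suffcond} this coefficient is the data-dependent quantity $2C_1 + 8\sigma_\mathrm{w}^2\log(\det(\bar{D}_T)) + 2\lambda\bar{\theta}^2$, whereas in \eqref{eq:preLMI} it is the deterministic $C_2 + C_3(\gamma_\mathrm{e}^2)$. Hence it suffices to prove that the latter dominates the former and then to invoke monotonicity of the L\"owner order under scaling by the fixed matrix $D_\mathrm{des}\succ 0$.

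By the definition of $C_2$ in \eqref{eq:C2C3} we have $C_2 = 2C_1 + 2\lambda\bar{\theta}^2$, so the only term requiring control is $8\sigma_\mathrm{w}^2\log(\det(\bar{D}_T))$. I would work on the event $\{\|W\|^2 \le \gamma_\mathrm{w}\}$, which by \eqref{eq:Wnorm} has probability at least $1-\delta$ and is already one of the two events underlying the $1-2\delta$ guarantee of Proposition~\ref{prop:penum_suffcond}, so that no further probability is sacrificed. On this event Lemma~\ref{lem:logDT} gives $\log(\det(\bar{D}_T)) \le n_\theta\log(a+b)$ with $a := 2T\|A_\mathrm{w}\|^2\gamma_\mathrm{w}$ and $b := 2T^2(\|A_\mathrm{u}\|^2+1)\gamma_\mathrm{e}^2 + \lambda$, matching exactly the two arguments that are split apart in the definition of $C_3(\gamma_\mathrm{e}^2)$.

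The crux is then the elementary bound $\log(a+b) \le \log a + \log b$, valid whenever $(a-1)(b-1)\ge 1$; this holds in the regime of interest, since $\|A_\mathrm{w}\|\ge 1$ together with $\gamma_\mathrm{w}$ growing linearly in $T$ forces $a$ to be large, while $b \ge \lambda$. Multiplying this split by $8\sigma_\mathrm{w}^2 n_\theta$ yields $8\sigma_\mathrm{w}^2\log(\det(\bar{D}_T)) \le C_3(\gamma_\mathrm{e}^2)$, and adding $C_2 = 2C_1 + 2\lambda\bar{\theta}^2$ gives the desired domination $2C_1 + 8\sigma_\mathrm{w}^2\log(\det(\bar{D}_T)) + 2\lambda\bar{\theta}^2 \le C_2 + C_3(\gamma_\mathrm{e}^2)$. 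Since $D_\mathrm{des}\succ 0$, scaling it by the larger nonnegative coefficient enlarges the subtracted term, so the matrix on the left of \eqref{eq:preLMI} is $\preceq$ the matrix on the left of \eqref{eq:penum_suffcond}. Consequently, positive semidefiniteness in \eqref{eq:preLMI} implies positive semidefiniteness in \eqref{eq:penum_suffcond}, and Proposition~\ref{prop:penum_suffcond} delivers the exploration goal \eqref{eq:exp_goal2} with probability at least $1-2\delta$. I expect the logarithm split to be the only delicate point—one must confirm that $a$ and $b$ are large enough for $\log(a+b)\le\log a+\log b$—while the remainder is a routine chaining of the preceding results through L\"owner monotonicity.
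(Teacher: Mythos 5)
Your proof follows essentially the paper's own (one-line) argument: insert the bound of Lemma~\ref{lem:logDT} into \eqref{eq:preLMI} via the definitions \eqref{eq:C2C3}, reduce to Proposition~\ref{prop:penum_suffcond}, and work on the event $\|W\|^2\le\gamma_\mathrm{w}$ so that the probability level stays at $1-2\delta$. You are in fact more explicit than the paper: the splitting $\log(a+b)\le\log a+\log b$, valid only when $(a-1)(b-1)\ge 1$, is used silently in the paper's definition of $C_3$, so your identification and large-$T$ justification of this step is a refinement of, not a departure from, the paper's proof.
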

\begin{proof}
    Using $C_2$ and $C_3$ \eqref{eq:C2C3}, and inserting Inequality \eqref{eq:logdetDTub} in \eqref{eq:preLMI} yields condition \eqref{eq:penum_suffcond} in Proposition \ref{prop:penum_suffcond}.
\end{proof}

The result in Proposition \ref{prop:preLMI} depends on the transfer matrix $V_\mathrm{tr}$ which is dependent on the true dynamics $A_\mathrm{tr},\,B_\mathrm{tr}$, and hence uncertain. Therefore, in what follows, we derive suitable bounds.

\subsection{Bounds on transfer matrices and Exploration SDP}\label{sec:expLMI}
Denote 
\begin{align}\label{eq:vtilde}
\tilde{V}&=V_\mathrm{tr}-\hat{V},\\
\label{eq:VhatVxhat}
\hat{V}&=[\hat{V}_{1},\cdots,\hat{V}_{L}] \in \mathbb{C}^{n_\mathrm{\phi}\times Ln_\mathrm{u}},
\end{align}
where $\hat{V}$ is computed using the initial estimates $\hat{\theta}_0=\mathrm{vec}([\hat{A}_0,\hat{B}_0])$ (cf. Assumption \ref{a2}). 
We compute a matrix $\tilde{\Gamma}_\mathrm{V} \succeq 0$ using $\theta_\mathrm{tr} \in \mathbf{\Theta}_0$ such that
\begin{align}\label{eq:tf_prop}
\tilde{V} \tilde{V}^\mathsf{H} \preceq \tilde{\Gamma}_\mathrm{V}.
\end{align}
Since Inequality \eqref{eq:preLMI} in Proposition \ref{prop:preLMI} is nonlinear in the decision variable $\gamma_\mathrm{e}$, we impose an upper bound of the form $\gamma_\mathrm{e}^2 \leq \bar{\gamma}$, where $\bar{\gamma}$ is fixed. Using the Schur complement, this criterion is equivalent to 
\begin{align}\label{eq:energyub}
    S_\textnormal{energy-bound-2}(\gamma_\mathrm{e},\bar{\gamma})=\begin{bmatrix}
        \bar{\gamma} & \gamma_\mathrm{e}\\ \gamma_\mathrm{e} & 1
    \end{bmatrix} \succeq 0.
\end{align}

\begin{table*}
\begin{align}\label{eq:LMI_mid}
\begin{bmatrix}
    V_\mathrm{tr}^\mathsf{H} \\ I
\end{bmatrix}^\mathsf{H}
\begin{bmatrix}
    (1-\epsilon) \left( U_\mathrm{e} \tilde{U}^\top +  \tilde{U} U_\mathrm{e}^\top - \tilde{U} \tilde{U}^\top \right) & 0 \\
    0 & - \tfrac{2(1-\epsilon)}{\epsilon} \Gamma_\mathrm{u}(\bar{\gamma}) - \tfrac{2(1-\epsilon)}{\epsilon} \Gamma_\mathrm{w}+ \tfrac{\lambda}{T} I_{n_\phi} -\tfrac{1}{T}\left( C_2 + C_3(\bar{\gamma}) \right) D_\mathrm{des}
\end{bmatrix}
\begin{bmatrix}
    V_\mathrm{tr}^\mathsf{H} \\ I
\end{bmatrix} \succeq 0
\end{align}
\begin{align}\label{eq:expLMI}
\nonumber
&S_\textnormal{exp}(\epsilon, \lambda, \tau, U_\mathrm{e}, \tilde{U}, \gamma_\mathrm{e}, \bar{\gamma}, \Gamma_\mathrm{u}, \Gamma_\mathrm{w}, \tilde{\Gamma}_\mathrm{v}, \hat{V}, D_\mathrm{des}):=\\
&\begin{bmatrix}
    (1-\epsilon) \left( U_\mathrm{e} \tilde{U}^\top +  \tilde{U} U_\mathrm{e}^\top - \tilde{U} \tilde{U}^\top \right) & 0 \\
    0 &  - \tfrac{2(1-\epsilon)}{\epsilon} \Gamma_\mathrm{u}(\bar{\gamma}) - \tfrac{2(1-\epsilon)}{\epsilon} \Gamma_\mathrm{w}+ \tfrac{\lambda}{T} I_{n_\phi} -\tfrac{1}{T}\left( C_2 + C_3(\bar{\gamma}) \right) D_\mathrm{des}
\end{bmatrix}
- \tau \begin{bmatrix}
        -I & \hat{V}^\mathsf{H}\\ \hat{V} & \tilde{\Gamma}_\mathrm{v} -\hat{V}\hat{V}^\mathsf{H}
    \end{bmatrix} \succeq 0
\end{align}
\end{table*}
The following theorem provides a sufficient condition linear in $U_\mathrm{e}$, which ensures the exploration goal \eqref{eq:exp_goal2}.
\begin{theorem} Let Assumptions \ref{a1}, \ref{a3} and \ref{a2} hold. Suppose the following SDP is feasible:
\begin{align}\label{eq:exp_problem}
    \underset{U_\mathrm{e},\gamma_\mathrm{e}, \atop {\tau \geq 0}}{\inf}  & \; \gamma_\mathrm{e}\\
\nonumber
    \mathrm{s.t. }& \; S_{\textnormal{energy-bound-1}}(\gamma_\mathrm{e},U_\mathrm{e})\succeq 0\\
\nonumber
    & \; S_\textnormal{energy-bound-2}(\gamma_\mathrm{e},\bar{\gamma}) \succeq 0\\
    & \; S_\textnormal{exp}(\epsilon, \lambda, \tau, U_\mathrm{e}, \tilde{U}, \gamma_\mathrm{e}, \bar{\gamma}, \Gamma_\mathrm{u}, \Gamma_\mathrm{w}, \tilde{\Gamma}_\mathrm{v}, \hat{V}, D_\mathrm{des}) \succeq 0\nonumber
\end{align}
where $S_{\textnormal{energy-bound-1}}$, $S_\textnormal{energy-bound-2}$ and $S_\textnormal{exp}$ are defined in \eqref{eq:min_energy_cost}, \eqref{eq:energyub} and \eqref{eq:expLMI}, respectively.
Then, an estimate $\hat{\theta}_T$ computed as in \eqref{eq:reg_mean_est} upon the application of the input \eqref{eq:exploration_controller} satisfies the exploration goal \eqref{eq:exp_goal2} with probability at least $1-2\delta$.
\end{theorem}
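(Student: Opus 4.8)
The plan is to show that feasibility of the SDP~\eqref{eq:exp_problem} implies condition~\eqref{eq:preLMI} of Proposition~\ref{prop:preLMI}, from which the claim follows directly. The three constraints play complementary roles. The constraint $S_{\textnormal{energy-bound-1}}\succeq 0$ encodes the spectral energy bound $\sum_{i}\|\bar u(\omega_i)\|^2\le\gamma_\mathrm{e}^2$ (cf.~\eqref{eq:upars}), which is exactly what renders the transient bound $\Gamma_\mathrm{u}(\gamma_\mathrm{e}^2)$ valid through Lemma~\ref{lem:transient} and Lemma~\ref{lem:Phiuerr_Phiw}. The constraint $S_{\textnormal{energy-bound-2}}\succeq 0$ enforces $\gamma_\mathrm{e}^2\le\bar\gamma$, while $S_{\textnormal{exp}}\succeq 0$ is an S-procedure relaxation that absorbs the parametric uncertainty in the transfer matrix $V_\mathrm{tr}$.

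First I would handle the parametric uncertainty via the S-procedure. The key step is to observe that the uncertainty bound~\eqref{eq:tf_prop}, $\tilde V\tilde V^\mathsf{H}=(V_\mathrm{tr}-\hat V)(V_\mathrm{tr}-\hat V)^\mathsf{H}\preceq\tilde{\Gamma}_\mathrm{v}$, is equivalent to
\begin{align*}
\begin{bmatrix} V_\mathrm{tr}^\mathsf{H}\\ I\end{bmatrix}^\mathsf{H}
\underbrace{\begin{bmatrix} -I & \hat V^\mathsf{H}\\ \hat V & \tilde{\Gamma}_\mathrm{v}-\hat V\hat V^\mathsf{H}\end{bmatrix}}_{=:N}
\begin{bmatrix} V_\mathrm{tr}^\mathsf{H}\\ I\end{bmatrix}\succeq 0,
\end{align*}
which I would verify by simply expanding the product. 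Denoting the block-diagonal matrix in~\eqref{eq:LMI_mid} by $M$ and letting $P$ denote the stacked factor $\begin{bmatrix} V_\mathrm{tr}^\mathsf{H}\\ I\end{bmatrix}$ appearing there, the LMI $S_{\textnormal{exp}}\succeq 0$ reads $M-\tau N\succeq 0$. Since $\theta_\mathrm{tr}\in\mathbf{\Theta}_0$ by Assumption~\ref{a2}, the bound~\eqref{eq:tf_prop} holds for the true $V_\mathrm{tr}$, so $P^\mathsf{H} N P\succeq 0$, and with $\tau\ge 0$ I obtain
\begin{align*}
P^\mathsf{H} M P = P^\mathsf{H}(M-\tau N)P+\tau\,P^\mathsf{H} N P\succeq 0,
\end{align*}
which is precisely~\eqref{eq:LMI_mid}.

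Next I would expand the quadratic form~\eqref{eq:LMI_mid} and compare it with~\eqref{eq:preLMI}. The two expressions coincide except that~\eqref{eq:LMI_mid} evaluates the input-dependent terms $\Gamma_\mathrm{u}$ and $C_3$ at $\bar\gamma$, whereas~\eqref{eq:preLMI} evaluates them at $\gamma_\mathrm{e}^2$. As $\Gamma_\mathrm{u}(\cdot)$ is linear and increasing in its argument and $C_3(\cdot)$ is monotonically increasing, and since $S_{\textnormal{energy-bound-2}}\succeq 0$ gives $\gamma_\mathrm{e}^2\le\bar\gamma$, I have $\Gamma_\mathrm{u}(\gamma_\mathrm{e}^2)\preceq\Gamma_\mathrm{u}(\bar\gamma)$ and $C_3(\gamma_\mathrm{e}^2)\le C_3(\bar\gamma)$; together with $D_\mathrm{des}\succ 0$ this makes the left-hand side of~\eqref{eq:preLMI} dominate that of~\eqref{eq:LMI_mid}. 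Hence~\eqref{eq:LMI_mid} implies~\eqref{eq:preLMI}, and Proposition~\ref{prop:preLMI} then delivers the exploration goal~\eqref{eq:exp_goal2} with probability at least $1-2\delta$.

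I expect the main obstacle to lie in the S-procedure step, specifically in establishing the exact equivalence between the parametric uncertainty set $\tilde V\tilde V^\mathsf{H}\preceq\tilde{\Gamma}_\mathrm{v}$ and the quadratic-form constraint $P^\mathsf{H} N P\succeq 0$, and in confirming that $\tilde{\Gamma}_\mathrm{v}$ is correctly computed from $\mathbf{\Theta}_0$ so that this constraint indeed holds for the true $V_\mathrm{tr}$. Once this reformulation is in place, only the sufficiency direction of the S-procedure is needed, so the identity $P^\mathsf{H} M P=P^\mathsf{H}(M-\tau N)P+\tau P^\mathsf{H} N P$ and the monotonicity comparison with~\eqref{eq:preLMI} are routine.
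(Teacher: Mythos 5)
Your proposal is correct and follows essentially the same route as the paper's proof: rewriting the bound $\tilde V\tilde V^\mathsf{H}\preceq\tilde{\Gamma}_\mathrm{v}$ as the quadratic-form constraint \eqref{eq:Vtr_ineq}, applying the S-procedure to get \eqref{eq:LMI_mid}, using $\gamma_\mathrm{e}^2\le\bar{\gamma}$ with monotonicity of $\Gamma_\mathrm{u}(\cdot)$ and $C_3(\cdot)$ to recover \eqref{eq:preLMI}, and then invoking Proposition~\ref{prop:preLMI}. The only (cosmetic) difference is that where the paper cites the matrix S-lemma of \cite{vanwaarde2022noisy}, you prove the required sufficiency direction directly via the identity $P^\mathsf{H}MP = P^\mathsf{H}(M-\tau N)P + \tau P^\mathsf{H}NP$, which is a self-contained and equally valid justification of the same step.
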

\begin{proof}
By using \eqref{eq:vtilde}, Inequality \eqref{eq:tf_prop} can be written as
\begin{align}\label{eq:Vtr_ineq}
    \begin{bmatrix}
        V_\mathrm{tr}^\mathsf{H} \\ I   \end{bmatrix}^\mathsf{H}
        \begin{bmatrix}
        -I & \hat{V}^\mathsf{H}\\ \hat{V} & \tilde{\Gamma}_\mathrm{v} -\hat{V}\hat{V}^\mathsf{H}
    \end{bmatrix}
    \begin{bmatrix}
        V_\mathrm{tr}^\mathsf{H} \\ I
    \end{bmatrix} \succeq 0.
\end{align}
By using the matrix S-lemma \cite{vanwaarde2022noisy}, Inequality \eqref{eq:LMI_mid} holds for all $V_\mathrm{tr}$ satisfying \eqref{eq:Vtr_ineq} if $S_\textnormal{exploration}(\epsilon, \lambda, \tau, U_\mathrm{e}, \tilde{U}, \gamma_\mathrm{e}, \bar{\gamma}, \Gamma_\mathrm{u}, \Gamma_\mathrm{w}, \tilde{\Gamma}_\mathrm{v}, \hat{V}, D_\mathrm{des}) \succeq 0$ \eqref{eq:expLMI} with $\tau \geq 0$. Inequality \eqref{eq:LMI_mid} can be written as
\begin{align}
    (1-\epsilon) V_\mathrm{tr}\left( U_\mathrm{e} \tilde{U}^\top +  \tilde{U} U_\mathrm{e}^\top - \tilde{U} \tilde{U}^\top \right) V_\mathrm{tr}^\mathsf{H}+ \tfrac{\lambda}{T} I_{n_\phi}  &\\
    - \tfrac{2(1-\epsilon)}{\epsilon} \Gamma_\mathrm{u}(\bar{\gamma}) - \tfrac{2(1-\epsilon)}{\epsilon} \Gamma_\mathrm{w}
    -\tfrac{1}{T}\left(C_2 + C_3 (\bar{\gamma}) \right) D_\mathrm{des}&\succeq 0.\nonumber
\end{align}
Inserting the inequality $\gamma_\mathrm{e}^2 \leq \bar{\gamma}$, i.e., $S_\textnormal{energy-bound-2}(\gamma_\mathrm{e},\bar{\gamma}) \succeq 0$, yields condition \eqref{eq:preLMI} in Proposition \ref{prop:preLMI}. Hence, an estimate $\hat{\theta}_T$ computed as in \eqref{eq:reg_mean_est} upon the application of the solution of Problem \eqref{eq:exp_problem}, $U_\mathrm{e}$, satisfies the exploration goal \eqref{eq:exp_goal2}
% with average input energy $\gamma_\mathrm{e}^2$, i.e., $S_{\textnormal{energy-bound-1}}(\gamma_\mathrm{e},U_\mathrm{e})\succeq 0$ \eqref{eq:min_energy_cost}, and 
with probability at least $1-2\delta$.
\end{proof}

A solution of \eqref{eq:exp_problem} yields $U_\mathrm{e}=\mathrm{diag}(\bar{u}(\omega_1),\dots,\bar{u}(\omega_L))$, i.e., the exploration input, which guarantees the desired uncertainty bound $D_\mathrm{des}$ \eqref{eq:exp_goal2}. In LMI \eqref{eq:expLMI}, the terms $\Gamma_\mathrm{u}$, $\Gamma_w$, $C_3$ and $G_\mathrm{tr}$ comprise constants $\|A_\mathrm{w}\|$, $\|A_\mathrm{u}\|$, $\|B_\mathrm{tr}\|$, $C$ and $\rho$ which are uncertain since they depend on the true dynamics $A_\mathrm{tr}, B_\mathrm{tr}$. Given $\theta_\mathrm{tr} \in \Theta_0$, such constants can be computed using robust control tools or scenario optimization as in \ifbool{arxivversion}{Appendix \ref{app:scenario}}{\cite[Appendix VIII]{venkatasubramanian2025beyond}}. Furthermore, the matrix $\tilde{\Gamma}_\mathrm{V} \succ 0$ may be computed using robust control methods, cf.  \cite[Appendices A, C]{venkatasubramanian2023sequential}, since \eqref{eq:tf_prop} is an LMI.
% as shown in Appendix \ref{app:scenario}.
% Sample-based estimates $\gamma_\mathrm{A_w}$, $\gamma_\mathrm{A_u}$ and $\gamma_\mathrm{G}$ are derived in Appendix \ref{app:scenario} and are valid with joint confidence of $1-3\beta$. Subsequently, the exploration goal is achieved with probability at least $1-3\delta - 3\beta$.
Note that imposing the upper bound $\bar{\gamma} \geq \gamma_\mathrm{e}^2$ and the convex relaxation procedure introduces suboptimality in the solution of \eqref{eq:exp_problem} This suboptimality can be reduced by iterating \eqref{eq:exp_problem} multiple times by re-computing $\tilde{U}$ and $\bar{\gamma}$, until they do not change, for the next iteration as
\begin{align}\label{eq:L_choice}
\tilde{U}=U_\mathrm{e}^*,\;\;\;
\bar{\gamma}= (\gamma_\mathrm{e}^*)^2,
\end{align}
where $U_\mathrm{e}^*, \gamma_\mathrm{e}^*$ is the solution from the previous iteration. Since the previous optimal solution $U_\mathrm{e}^*$ remains feasible in the next iteration, $\gamma_\mathrm{e}$ is guaranteed to be non-increasing. From LMI \eqref{eq:expLMI}, it can be inferred that $\gamma_\mathrm{e}^2$ scales as $\frac{1}{T}$. As $D_\mathrm{des}$ and $T$ grow proportionally, $U_\mathrm{e}$ remains nearly unchanged. The proposed targeted exploration strategy is outlined in Algorithm \ref{alg:main}. Note that the proposed algorithm directly ensures the desired accuracy of the parameters and requires no iterative experiments.

\begin{algorithm}[h]
\caption{Targeted exploration}
\label{alg:main}
\begin{algorithmic}[1]
\State Specify exploration length $T$, frequencies $\omega_i,\,i=1,...,L$, initial estimates $\hat{A}_0,\,\hat{B}_0$ and uncertainty level $D_0$, desired accuracy of parameters $D_\mathrm{des}$, set $\epsilon, \lambda$, probability level $\delta$. 

\State Compute $\hat{V}$ \eqref{eq:VhatVxhat} using the initial estimates and $\tilde{\Gamma}_\mathrm{v}$ \eqref{eq:tf_prop}. Compute constants $C_1$ \eqref{eq:C1}, $\|A_\mathrm{w}\|$, $\|A_\mathrm{u}\|$, $\|B_\mathrm{tr}\|$, $C$ and $\rho$ for $C_2$ and $C_3$ \eqref{eq:C2C3},  $\Gamma_\mathrm{u}$ \eqref{eq:phiuerr_bound}, $\Gamma_\mathrm{w}$ \eqref{eq:Gammaw} and $G_\mathrm{tr}$ \eqref{eq:Gtr}

\State Select initial candidates $\tilde{U}$ and $\bar{\gamma}$ \eqref{eq:L_choice}.

\State Set tolerance $\mathrm{tol}>0$.

% \While{$\textnormal{amplitudes }\bar{u}(\omega_i)\textnormal{ not converged}$}
\While{$\lvert \frac{\gamma_\mathrm{e} - \gamma_\mathrm{e}^*}{\gamma_\mathrm{e}}\rvert \geq \mathrm{tol}$}
\State Solve the optimization problem \eqref{eq:exp_problem}.

\State Update $\tilde{U}$ and $\bar{\gamma}$ \eqref{eq:L_choice}.
\EndWhile

\State Apply the exploration input \eqref{eq:exploration_controller} for $k=0,...,T-1$.

\State Compute parameter estimate $\hat{\theta}_T$ \eqref{eq:reg_mean_est}.

\end{algorithmic}
\end{algorithm}

\subsection{Discussion}
The proposed approach is related to the methods in \cite{umenberger2019robust, ferizbegovic2019learning, barenthin2008identification}, which assume i.i.d.~Gaussian disturbances with zero mean. In contrast, our approach considers sub-Gaussian disturbances, which is a broader class that includes i.i.d.~Gaussian as a special case, as well as other light-tailed distributions like uniform distributions. In contrast to exploration methods \cite{umenberger2019robust, ferizbegovic2019learning, barenthin2008identification} with asymptotic guarantees, we derive an exploration strategy with non-asymptotic guarantees. Unlike the classical stochastic bound \cite{umenberger2019robust}, the term $(R(\tilde{D}_T)^\frac{1}{2} + \lambda^\frac{1}{2} \bar{\theta})^2$ appearing in Theorem \ref{thm:RLS_suffcond} that scales the uncertainty bound is also data-dependent. This term presents significant additional challenges in input design, as it introduces yet another coupling between the input design and the resulting uncertainty bound. The derived exploration strategy is also related to recent works \cite{wagenmaker2020active, sarker2020parameter}, which consider periodic/sinusoidal inputs and frequency-domain analysis for parameter estimation. However, these methods neglect transient effects~\cite{ljung1999sysid} during input design and analysis, making their guarantees reliant on steady-state responses, which may lead to impractically long experiments in many applications. On the contrary, we explicitly account for the transient error due to the input, the effect of the disturbances, and the initial parametric uncertainty to yield a robust exploration strategy. 

% Furthermore, the method in \cite{wagenmaker2020active} focuses only on learning the system by assuming a known control matrix, and under specific structural assumptions. Moreover, we make no structural assumptions about the parameters and aim to learn all parameters to a user-specified level of accuracy.

% \section{Regularized}
% \input{sections/regularizedls}

\section{Numerical Example}\label{sec:numerical}

In this section, we demonstrate the applicability of the exploration strategy through a numerical example. Numerical simulations\footnote{\footnotesize{The source code for the simulations is available at https:$//$github.com$/$jananivenkatasubramanian$/$NonAsymptoticTE}} were performed on MATLAB using CVX \cite{cvx} in conjunction with the default SDP solver SDPT3. We consider a linear system \eqref{eq:sys} with
\begin{align}\label{eq:exsys}
A_\mathrm{tr}=\begin{bmatrix}
0.49 & 0.49 & 0 \\ 0 & 0.49 & 0.49 \\ 0 & 0 & 0.49  
\end{bmatrix},\;
B_\mathrm{tr}=\begin{bmatrix}
0 \\ 0 \\ 0.49
\end{bmatrix}
\end{align}
which belongs to a class of systems identified as `hard to learn' \cite{tsiamis2021linear}. We select $L=5$ frequencies $\omega_i\in\{0,0.1,0.2,0.3,0.4\}$ and the initial estimate as $\hat{\theta}_0=\theta_\mathrm{tr}+(4 \times 10^{-4})\mathbf{1}_{n_\mathrm{x}n_\phi}$ with initial uncertainty level $D_0 = 10^3 I_{n_\phi}$. We set $\epsilon=0.5$, $\lambda =1$, $\delta = 0.05$ and $\sigma_\mathrm{w}=0.01$. In what follows, we study the effect of the finite duration $T$ of the experiment on the average input energy required.
\begin{figure}[t!]
\begin{center}
\includegraphics[width=0.5\textwidth]{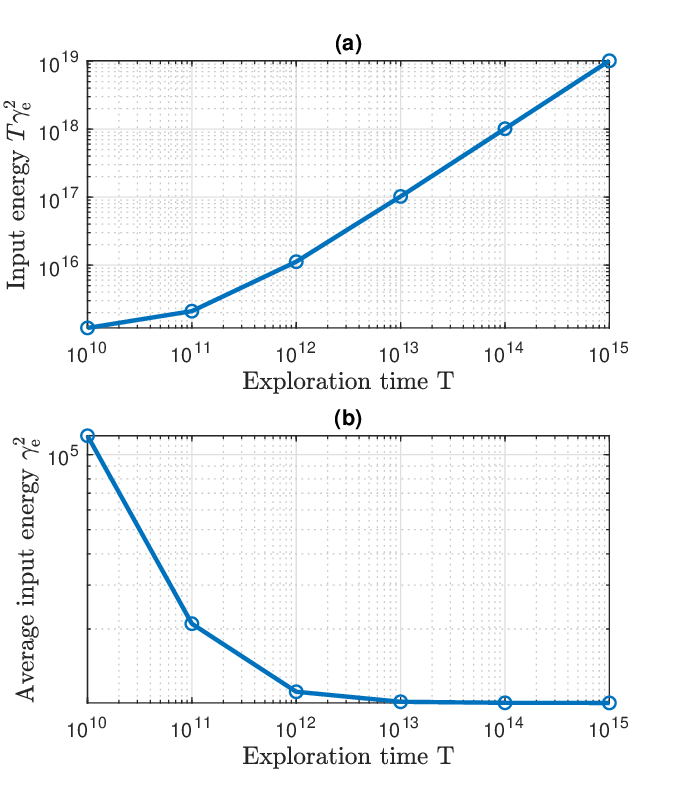}
\end{center}
\caption{Illustration of (a) the exploration input energy $T\gamma_\mathrm{e}^2$, (b) the average input energy $\gamma_\mathrm{e}^2$, in comparison with the exploration time $\gamma_\mathrm{w}$ for the initial uncertainty level $\|D_0\|=10^3$ and $\frac{\|D_\mathrm{des}\|}{T}=10^{-10}$, i.e., the desired accuracy is increased linearly with the exploration time $T$.}
\label{fig:ge-T} 
\end{figure}
\textit{Non-asymptotic behaviour: } In the asymptotic case where $T \to \infty$ \cite{venkatasubramanian2023sequential}, the required input energy $T\gamma_\mathrm{e}^2$ scales linearly with the desired accuracy $\|D_\mathrm{des}\|$. In this example, we study the effect of the exploration time $T$ and the transient error decay on the required input energy $\gamma_\mathrm{e}^2$. For this purpose, we run six trials for the following exploration times $T \in \{10^{10},10^{11},10^{12},10^{13},10^{14},10^{15}\}$. For each trial, we scale $D_\mathrm{des}$ proportionately, yielding $D_\mathrm{des} \in \{10^0,10^1,10^2,10^3,10^4,10^5 \}$, respectively. Each trial comprises executing Algorithm \ref{alg:main} to obtain the exploration inputs \eqref{eq:exploration_controller} and average input energy $\gamma_\mathrm{e}^2$. In Figure \ref{fig:ge-T}, we see that for $T >> 10^{12}$, we approximately recover this linear relationship, while for $T < 10^{11}$, we see that the transient effects are significant and more input energy is required. This particularly highlights the role of transient effects in finite-time experiment design, indicating the need for careful input design in order to mitigate transient effects even for large times $T$.

\section{Conclusion}
We presented a non-asymptotic targeted exploration strategy for linear systems subject to general sub-Gaussian disturbances. As the main result, we derived LMIs that guarantee an $\textit{a priori}$ error-bound on the estimated parameters after finite-time exploration with high probability. We explicitly accounted for transient effects and parametric uncertainty in the analysis and input design. Through a numerical example, we demonstrated the role of transient effects in non-asymptotic experiment design.

\bibliographystyle{ieeetr}
\bibliography{lit} 

\ifbool{arxivversion}{
\appendices
% \section{Transient error due to disturbance}\label{app:noisevariance}
% \input{sections/normnoisevariance}

% \section{Martingale proofs}\label{app:martingale proofs}
% \input{sections/martingaleproofs}

\section{Proof of Theorem \ref{thm:RLS_suffcond}}\label{app:thm_RLS_suffcond}
\begin{proof}
By applying the Schur complement twice to the condition in \eqref{eq:nonreg_ellipsoid}, we have
\begin{align}\label{eq:suffcondrls_p1}
\nonumber
    &(\hat{\theta}_T-\theta_\mathrm{tr}) (\hat{\theta}_T-\theta_\mathrm{tr})^\top\\
\nonumber
    \preceq& \left(R(\bar{D}_T)^\frac{1}{2} + \lambda^\frac{1}{2} \bar{\theta} \right)^2 \bar{D}_T^{-1}\\
    \overset{\eqref{eq:reg_mean_est}}{=}& \left(R(\bar{D}_T)^\frac{1}{2} + \lambda^\frac{1}{2} \bar{\theta} \right)^2 ((\Phi \Phi^\top) \otimes I_{n_\mathrm{x}} + \lambda I_{n_\phi n_\mathrm{x}})^{-1}
\end{align}
with probability at least $1-\delta$. Inequality \eqref{eq:suffcond_rls} can be written as
\begin{align}\label{eq:suffcondrls_p2}
\nonumber
    & \left((\Phi \Phi^\top) \otimes I_{n_\mathrm{x}} + \lambda I_{n_\phi n_\mathrm{x}}\right)^{-1}\\
    \preceq & \left(R(\bar{D}_T)^\frac{1}{2} + \lambda^\frac{1}{2} \bar{\theta} \right)^{-2}  (D_\mathrm{des} \otimes I_{n_\mathrm{x}})^{-1}.
\end{align}
By inserting \eqref{eq:suffcondrls_p2} in \eqref{eq:suffcondrls_p1}, we get
\begin{align}\label{eq:suffcondrls_p3}
    (\theta-\hat{\theta}_T) (\theta-\hat{\theta}_T)^\top \preceq (D_\mathrm{des}\otimes I_{n_\mathrm{x}})^{-1}
\end{align}
with probability at least $1-\delta$. Finally, applying the Schur complement twice to \eqref{eq:suffcondrls_p3} yields the exploration goal \eqref{eq:exp_goal2}.
\end{proof}

\section{Proof of Lemma \ref{lem:boundRDT}}\label{app:lem_boundRDT}
\begin{proof}\JV{We first derive an upper bound on the term $R(\bar{D}_T)$ as follows:
\begin{align}\label{eq:boundRDT}
\nonumber
    & R(\bar{D}_T)\overset{\eqref{eq:nonreg-vector}}{=} n_\mathrm{x} \sigma_\mathrm{w}^2 \log \left(
\frac{n_\mathrm{x}^2 \det (\bar{D}_T)}{\delta^2 \det(\lambda I_{n_\mathrm{x} n_\phi})} \right)\\
\nonumber
    = &  n_\mathrm{x} \sigma_\mathrm{w}^2 \left( \log \left( \frac{n_\mathrm{x}^2}{\delta^{2}}\right) - \log (\det(\lambda I_{n_\theta}))  + \log(\det(\bar{D}_T)) \right) \\
    = & \underbrace{n_\mathrm{x} \sigma_\mathrm{w}^2 \left( \log \left( \frac{n_\mathrm{x}^2}{\delta^{2}}\right) -n_\theta\log (\lambda) \right)}_{C_1} + n_\mathrm{x} \sigma_\mathrm{w}^2 \log(\det(\bar{D}_T)).
\end{align}
Finally, by Young's inequality \cite{caverly2019lmi}, we have that
\begin{align*}
\nonumber
(R(\bar{D}_T)^\frac{1}{2} + \lambda^\frac{1}{2} \bar{\theta})^2 \leq  & ( 2 R(\bar{D}_T) + 2 \lambda \bar{\theta}^2 ) \\
\overset{\eqref{eq:boundRDT}}{\leq}  &  ( 2C_1 +2 n_\mathrm{x}\sigma_\mathrm{w}^2 \log(\det(\bar{D}_T)) + 2 \lambda \bar{\theta}^2 )
\end{align*}    
which yields Inequality \eqref{eq:bound_RDT_lambda}.}
\end{proof}% \begin{proof}
% We first derive an upper bound on the term $R(\bar{D}_T)$ as follows:
% \begin{align}\label{eq:boundRDT}
% \nonumber
%     & R(\bar{D}_T)\overset{\eqref{eq:nonreg-vector}}{=} 4 \sigma_\mathrm{w}^2 \log \left( \frac{5^{2 n_\mathrm{x}} \det (\bar{D}_T)}{\delta^2 \det(\lambda I_{n_\theta})} \right)\\
% \nonumber
%     = &  4 \sigma_\mathrm{w}^2 \left( \log \left( \frac{5^{2 n_\mathrm{x}}}{\delta^{2}}\right) - \log (\det(\lambda I_{n_\theta}))  + \log(\det(\bar{D}_T)) \right) \\
%     \leq & \underbrace{4 \sigma_\mathrm{w}^2 \left( \log \left( \frac{5^{2 n_\mathrm{x}}}{\delta^{2}}\right) -n_\theta\log (\lambda) \right)}_{C_1} + 4 \sigma_\mathrm{w}^2 \log(\det(\bar{D}_T)).
% \end{align}
% Finally, by Young's inequality \cite{caverly2019lmi}, we have that
% \begin{align*}
% \nonumber
% \left(R(\bar{D}_T)^\frac{1}{2} + \lambda^\frac{1}{2} \bar{\theta} \right)^2 \leq  & \left( 2 R(\bar{D}_T) + 2 \lambda \bar{\theta}^2 \right) \\
% \overset{\eqref{eq:boundRDT}}{\leq}  &  \left( 2C_1 +8 \sigma_\mathrm{w}^2 \log(\det(\bar{D}_T)) + 2 \lambda \bar{\theta}^2 \right)
% \end{align*}    
% which yields Inequality \eqref{eq:bound_RDT_lambda}.
% \end{proof}

\section{Proof of Proposition \ref{prop:RLS_suffcond_RDTbound}}\label{app:prop_RLS_sc}
% \begin{proof}
% Starting from \eqref{eq:suffcond_prop}, we have
% \begin{align}
% \nonumber
%     0 \preceq & \Phi \Phi^\top + \lambda I_{n_\phi}\\
% \nonumber
%     & -\left( 2C_1 + 8 \sigma_\mathrm{w}^2 \log(\det(\bar{D}_T)) + 2 \lambda \bar{\theta}^2 \right) D_\mathrm{des}\\
% \nonumber \overset{\eqref{eq:bound_RDT_lambda}}{\preceq}& \Phi \Phi^\top + \lambda I_{n_\phi} - \left(R(\bar{D}_T)^\frac{1}{2} + \lambda^\frac{1}{2} \bar{\theta} \right)^2 D_\mathrm{des},
% \end{align}
% which yields condition \eqref{eq:suffcond_rls} in Theorem \ref{thm:RLS_suffcond}. Hence, the exploration goal \eqref{eq:exp_goal2} is achieved with probability at least $1-\delta$.
% \end{proof}
\begin{proof}
Starting from \eqref{eq:suffcond_prop}, we have
\begin{align}
\nonumber
    0 \preceq & \Phi \Phi^\top + \lambda I_{n_\phi}\\
\nonumber
    & -\left( 2C_1 + \JV{2 n_\mathrm{x}} \sigma_\mathrm{w}^2 \log(\det(\bar{D}_T)) + 2 \lambda \bar{\theta}^2 \right) D_\mathrm{des}\\
\nonumber \overset{\eqref{eq:bound_RDT_lambda}}{\preceq}& \Phi \Phi^\top + \lambda I_{n_\phi} - \left(R(\bar{D}_T)^\frac{1}{2} + \lambda^\frac{1}{2} \bar{\theta} \right)^2 D_\mathrm{des},
\end{align}
which yields condition \eqref{eq:suffcond_rls} in Theorem \ref{thm:RLS_suffcond}. Hence, the exploration goal \eqref{eq:exp_goal2} is achieved with probability at least $1-\delta$.
\end{proof}

\section{Proof of Lemma \ref{lem:phiphitight}}\label{app:lem_phiphitight}
\begin{proof}
From the Parseval-Plancherel identity, we have
\begin{align*}
\nonumber
    \frac{1}{T}\Phi \Phi^\top &\overset{\eqref{eq:spectral_amplitude}}{=} \sum_{i=1}^{T} \bar{\phi}(\omega_i) \bar{\phi}(\omega_i)^\mathsf{H}\succeq  \sum_{i=1}^{L} \bar{\phi}(\omega_i) \bar{\phi}(\omega_i)^\mathsf{H}\\
\nonumber
    & \overset{\eqref{eq:phibarcompact}}{=}(V_\mathrm{tr}U_\mathrm{e}+\bar{\Phi}_\mathrm{u,t}+\bar{\Phi}_\mathrm{w}) (V_\mathrm{tr}U_\mathrm{e}+\bar{\Phi}_\mathrm{u,t}+\bar{\Phi}_\mathrm{w})^\mathsf{H}.
\end{align*}
For any $\epsilon >0$, applying Young's inequality \cite{caverly2019lmi} twice gives
\begin{align*}
\nonumber
& (V_\mathrm{tr}U_\mathrm{e}+\bar{\Phi}_\mathrm{u,t}+\bar{\Phi}_\mathrm{w}) (V_\mathrm{tr}U_\mathrm{e}+\bar{\Phi}_\mathrm{u,t}+\bar{\Phi}_\mathrm{w})^\mathsf{H} \\
\nonumber
 \succeq & 
 (1-\epsilon)V_\mathrm{tr}U_\mathrm{e}U_\mathrm{e}^\top V_\mathrm{tr}^\mathsf{H}-\left(\tfrac{1-\epsilon}{\epsilon}\right)(\bar{\Phi}_\mathrm{u,t}+\bar{\Phi}_\mathrm{w}) (\bar{\Phi}_\mathrm{u,t}+\bar{\Phi}_\mathrm{w})^\mathsf{H} \\  
 \succeq & 
 (1-\epsilon)V_\mathrm{tr}U_\mathrm{e}U_\mathrm{e}^\top V_\mathrm{tr}^\mathsf{H}-2\left(\tfrac{1-\epsilon}{\epsilon}\right)(\bar{\Phi}_\mathrm{u,t}\bar{\Phi}_\mathrm{u,t}^\mathsf{H} +\bar{\Phi}_\mathrm{w}\bar{\Phi}_\mathrm{w}^\mathsf{H}),
 \end{align*}
which yields Inequality \eqref{eq:phiphitight}.
\end{proof}

\section{Proof of Lemma \ref{lem:Phiuerr_Phiw}}\label{app:lem_Phiuerr_Phiw}
\begin{proof} Using \eqref{eq:norm_phiuerr} from Lemma \ref{lem:transient} and \eqref{eq:Gtr}, we have
\begin{align*}
\nonumber
\bar{\Phi}_\mathrm{u,t}\bar{\Phi}_\mathrm{u,t}^\mathsf{H} & \preceq \left(\sum_{i=1}^L \max_{\omega_i \in \Omega_T}\|\bar{x}_{\mathrm{u,t}}(\omega_i)\|^2 \right)I_{n_\phi}\overset{\eqref{eq:norm_phiuerr}}{\preceq}  \frac{ 4 L \gamma_\mathrm{e}^2 G_\mathrm{tr}^2}{T} I_{n_\phi}\\
& \overset{\eqref{eq:Gtr}}{\preceq} \frac{ 4 L \gamma_\mathrm{e}^2}{T} \left(\dfrac{\|B_\mathrm{tr}\| C\rho}{(1-\rho)^2}\right)^2I_{n_\phi},
    % \preceq \frac{ 4 L g^2_\star \gamma_\mathrm{e}^2}{T} I_{n_\phi}
\end{align*}
which yields \eqref{eq:phiuerr_bound}. By using the sub-multiplicativity property of the operator norm, we have
\begin{align}
\nonumber    \bar{\Phi}_\mathrm{w}\bar{\Phi}_\mathrm{w}^\mathsf{H} \overset{\eqref{eq:Phiwcompact}}{\preceq} & \left(\max_{\omega_i \in \Omega_T} \sqrt{L} \|F_{\omega_i} A_\mathrm{w} W\|\right)^2 I_{n_\phi} \\
\nonumber
    \overset{\eqref{eq:supFwi}}{\preceq} & \left( \tfrac{L}{T}  \|A_\mathrm{w}\|^2 \|W\|^2\right) I_{n_\phi}    \overset{ \eqref{eq:Wnorm}}{\preceq} \left(\tfrac{L}{T} \|A_\mathrm{w}\|^2 \gamma_\mathrm{w}\right) I_{n_\phi},
    % \overset{\eqref{eq:Awnorm}, \atop \eqref{eq:Wnorm}}{\preceq} \left(\tfrac{L}{T} \gamma_\mathrm{A_w} \gamma_\mathrm{w}\right) I_{n_\phi},
\end{align}
with probability at least $1-\delta$, which yields \eqref{eq:Gammaw}.
\end{proof}

\section{Proof of Proposition \ref{prop:penum_suffcond}}\label{app:prop_penum_suffcond}
\begin{proof}
The proof is divided into two parts. In the first part, we show that condition \eqref{eq:penum_suffcond} implies condition \eqref{eq:suffcond_prop} in Proposition \ref{prop:RLS_suffcond_RDTbound}. In the second part, we derive joint probabilistic bounds for the Inequality \eqref{eq:penum_suffcond}.

\textbf{Part I. } From \cite[Lemma 12]{venkatasubramanian2024robust}, we have the following convex relaxation: 
\begin{align}\label{eq:conv_rel}
V_\mathrm{tr} U_\mathrm{e} U_\mathrm{e}^\top {V_\mathrm{tr}}^\mathsf{H}  \succeq V_\mathrm{tr}\left( U_\mathrm{e} \tilde{U}^\top +  \tilde{U} U_\mathrm{e}^\top - \tilde{U} \tilde{U}^\top \right)V_\mathrm{tr}^\mathsf{H}.
\end{align}
Starting from \eqref{eq:penum_suffcond}, we have 
\begin{align}\label{eq:penum_mid}
    & 0\\
\nonumber
    \preceq &(1-\epsilon) V_\mathrm{tr}\left( U_\mathrm{e} \tilde{U}^\top +  \tilde{U} U_\mathrm{e}^\top - \tilde{U} \tilde{U}^\top \right) V_\mathrm{tr}^\mathsf{H}\\
\nonumber
    & - \frac{2(1-\epsilon)}{\epsilon} \Gamma_\mathrm{u}(\gamma_\mathrm{e}^2) - \frac{2(1-\epsilon)}{\epsilon} \Gamma_\mathrm{w}+ \frac{\lambda}{T} I_{n_\phi}\\
\nonumber
    & -\frac{1}{T} \big( 2C_1 + \JV{2 n_\mathrm{x}} \sigma_\mathrm{w}^2 \log(\det(\bar{D}_T)) + 2 \lambda \bar{\theta}^2 \big) D_\mathrm{des}\\
\nonumber
    \overset{\eqref{eq:conv_rel}}{\preceq}& (1-\epsilon) V_\mathrm{tr} U_\mathrm{e} U_\mathrm{e}^\top V_\mathrm{tr}^\mathsf{H} - \frac{2(1-\epsilon)}{\epsilon} \Gamma_\mathrm{u}(\gamma_\mathrm{e}^2)- \frac{2(1-\epsilon)}{\epsilon} \Gamma_\mathrm{w}\\
\nonumber
    & + \frac{\lambda}{T} I_{n_\phi}
    -\frac{1}{T}\left( 2C_1 + \JV{2 n_\mathrm{x}}\sigma_\mathrm{w}^2 \log(\det(\bar{D}_T)) + 2 \lambda \bar{\theta}^2 \right) D_\mathrm{des}\\
\nonumber
    \overset{\eqref{eq:phiuerr_bound}, \atop \eqref{eq:Gammaw}}{\preceq} & (1-\epsilon) V_\mathrm{tr} U_\mathrm{e} U_\mathrm{e}^\top V_\mathrm{tr}^\mathsf{H}  - \frac{2(1-\epsilon)}{\epsilon} \bar{\Phi}_\mathrm{u,t}\bar{\Phi}_\mathrm{u,t}^\mathsf{H}\\
\nonumber
    & - \frac{2(1-\epsilon)}{\epsilon} \bar{\Phi}_\mathrm{w}\bar{\Phi}_\mathrm{w}^\mathsf{H}  + \frac{\lambda}{T} I_{n_\phi} \\
\nonumber
    &-\frac{1}{T}\left( 2C_1 + \JV{2 n_\mathrm{x}} \sigma_\mathrm{w}^2 \log(\det(\bar{D}_T)) + 2 \lambda \bar{\theta}^2 \right) D_\mathrm{des}\\
\nonumber
    \overset{\eqref{eq:phiphitight}}{\preceq}& \frac{1}{T} (\Phi \Phi^\top + \lambda I_{n_\phi})\\
\nonumber
    &-\frac{1}{T}(\left( 2C_1 + \JV{2 n_\mathrm{x}} \sigma_\mathrm{w}^2 \log(\det(\bar{D}_T)) + 2 \lambda \bar{\theta}^2 \right) D_\mathrm{des}).
\end{align}
Multiplying \eqref{eq:penum_mid} by $T$ yields \eqref{eq:suffcond_prop} in Proposition \ref{prop:RLS_suffcond_RDTbound}.

\textbf{Part II.} 
We have
\begin{align}\label{eq:jointprob}
\nonumber
    & \mathbb{P}\left[ \|\hat{\theta}_T - \theta_\mathrm{tr}\|_{(D_\mathrm{des}\otimes I_{n_\mathrm{x}})} \leq 1 \right]\\
\nonumber
    \geq &\mathbb{P}\left[(\hat{\theta}_\mathrm{tr} \in \Theta_T)\cap (\|W\|^2\leq \gamma_\mathrm{w})\right]\\ \geq
    & 1 - \mathbb{P}\left[\hat{\theta}_\mathrm{tr} \notin \Theta_T\right] - \mathbb{P}\left[\|W\|^2\nleq \gamma_\mathrm{w}\right]  \overset{\eqref{eq:nonreg_ellipsoid}, \atop \eqref{eq:Wnorm}}{\geq}  1-2\delta,
\end{align}
wherein the penultimate inequality follows from De Morgan's law.

In Part I, assuming $\hat{\theta}_\mathrm{tr} \in \Theta_T$ \eqref{eq:nonreg_ellipsoid} and $\|W\|^2 \leq \gamma_\mathrm{w}$ \eqref{eq:Wnorm}, Inequality \eqref{eq:penum_suffcond} implies \eqref{eq:exp_goal2}. Since both $\hat{\theta}_\mathrm{tr} \in \Theta_T$ and $\|W\|^2 \leq \gamma_\mathrm{w}$ hold jointly with probability at least $1-2\delta$ as shown in Part II, Inequality \eqref{eq:penum_suffcond} implies \eqref{eq:exp_goal2} with probability at least $1-2\delta$.
\end{proof}

\section{Proof of Lemma \ref{lem:logDT}
}\label{app:lem_logDT}
\begin{proof}
With $L=T$, and using \eqref{eq:barphiu_redef} and \eqref{eq:barphiw_redef}, we have
\begin{align}
\nonumber
 \bar{\Phi}_\mathrm{u}&= \begin{bmatrix}F_{x,\omega_1}A_\mathrm{u}U&\cdots&F_{x,\omega_{T}}A_\mathrm{u}U \\ F_{u,\omega_1}U&\cdots&F_{u,\omega_{T}}U
 \end{bmatrix},\\
 \nonumber \bar{\Phi}_\mathrm{w}&=\begin{bmatrix}F_{x,\omega_1}A_\mathrm{w}W & \cdots & F_{x,\omega_{T}}A_\mathrm{w}W \\ 0 & \cdots & 0\end{bmatrix}.
\end{align}
Similar to \eqref{eq:Gammaw}, using the sub-multiplicativity property of the operator norm, \eqref{eq:upars}, and \eqref{eq:supFwi}, we have
\begin{align}\label{eq:phiunorm}
\nonumber
    \bar{\Phi}_\mathrm{u} \bar{\Phi}_\mathrm{u}^\mathsf{H} &\preceq \max_{\omega_i \in \Omega_T} T\left( \|F_{x,\omega_i} A_\mathrm{u} U\|^2 + \|F_{u,\omega_i} U\|^2 \right) I_{n_\phi}\\
    &\preceq T \left(\|A_\mathrm{u}\|^2+1 \right)\gamma_\mathrm{e}^2  I_{n_\phi}.
\end{align}
By the Parseval-Plancheral identity, Young's inequality \cite{caverly2019lmi}, Inequalities \eqref{eq:Gammaw}, \eqref{eq:phiunorm}, and with $L=T$, we have
\begin{align*}
\nonumber
\Phi \Phi^\top &= T \sum_{i=1}^{T}\bar{\phi}(\omega_i) \bar{\phi}(\omega_i)^\mathsf{H}\\
\nonumber
&= T \left(\bar{\Phi}_\mathrm{u}+\bar{\Phi}_\mathrm{w}\right) \left(\bar{\Phi}_\mathrm{u}+\bar{\Phi}_\mathrm{w}\right)^\mathsf{H}\\
\nonumber
 & \preceq  2T\left(\bar{\Phi}_\mathrm{u}\bar{\Phi}_\mathrm{u}^\mathsf{H} + \bar{\Phi}_\mathrm{w}\bar{\Phi}_\mathrm{w}^\mathsf{H}\right)\\
 & \overset{\eqref{eq:Gammaw}, \atop \eqref{eq:phiunorm}}{\preceq} 2T(\|A_\mathrm{w}\|^2 \gamma_\mathrm{w} + T (\|A_\mathrm{u}\|^2+1) \gamma_\mathrm{e}^2)I_{n_\phi}.
\end{align*}
% with probability at least $1-\delta$.
Hence, we have
\begin{align*}
\nonumber
    \bar{D}_T&\overset{\eqref{eq:reg_mean_est}}{=}(\Phi \Phi^\top)\otimes I_{n_\mathrm{x}} + \lambda I_{n_\theta}\\
    &\preceq \left( 2T\|A_\mathrm{w}\|^2 \gamma_\mathrm{w} + 2T^2 (\|A_\mathrm{u}\|^2+1) \gamma_\mathrm{e}^2 + \lambda \right) I_{n_\theta},
\end{align*}
% with probability at least $1-\delta$,
which yields \eqref{eq:logdetDTub}.
\end{proof}

\section{Sample-based constants}\label{app:scenario}
In order to estimate $G_\mathrm{tr}$, $\|A_\mathrm{w}\|^2$ and $\|A_\mathrm{u}\|^2$, we generate $N_\mathrm{s}$ samples of $\mathrm{vec}([A_i,B_i])=\theta_i \in \mathbf{\Theta}_0$, $i=1,\dots, N_\mathrm{s}$, (cf. Assumption \ref{a2}). Given a probability of violation $\delta$, confidence $1-\beta$, and the number of uncertain decision variables $d=1$, a lower bound on the number of samples $N_\mathrm{s}$ required to estimate $G_\mathrm{tr}$, $\|A_\mathrm{w}\|^2$ or $\|A_\mathrm{u}\|^2$ with confidence $1-\beta$ is given as \cite{campi2009scenario}: $N_\mathrm{s} \geq \frac{2}{\delta}\left( \ln \frac{1}{\beta} + d \right)$.
\subsubsection*{Estimate of $\gamma_\mathrm{G} \geq G_\mathrm{tr}$} 
Since $A_\mathrm{tr}$ is Schur stable (cf. Assumption \ref{a3}), we can determine constants $C$ and $\rho$ such that $\|A_i^k\| \leq C \rho^k,\, \forall i=1,...,N_\mathrm{s}$. Denote $\bar{B}=\max_{i=1,...,N_\mathrm{s}}\|B_i\|$. From the convergence of arithmetico-geometric series, we have 
\begin{align}
    G_\mathrm{tr}=\sum_{k=1}^\infty k \|A_\mathrm{tr}^k B_\mathrm{tr}\| \leq \bar{B}\sum_{k=1}^\infty k C\rho^k
  = \dfrac{\bar{B} C\rho}{(1-\rho)^2}=:\gamma_\mathrm{G}.
\end{align}

% the sum $\|G\|_\star = \sum_{k=1}^\infty k \|A_\mathrm{tr}^{k-1}B_\mathrm{tr}\|$ converges to $\|G\|_\star \leq g_\star = \|B_\mathrm{tr}\| (1-\|A_\mathrm{tr}\|)^{-2}$. 
% The estimate $g_\star$ can be computed by solving the following SDP:
% \begin{align}
% \nonumber
% \underset{g_\star}{\min}  &\quad g_\star \\
% \text{s.t. }& \quad g_\star \succeq \|B_i\| (1-\|A_i\|)^{-2},\;i=1,...,N_\mathrm{s}.
% \end{align}

% \JK{Upper bound $G*$ using Schur stable
% \begin{align*}
%   \|G\|_\star=\sum_{k=1}^\infty k \|A_\mathrm{tr}^k B_\mathrm{tr}\| \leq \sum_{k=1}^\infty k C\rho^k
%   = \dfrac{C\rho}{(1-\rho)^2},
% \end{align*}
% where last equality corresponds to  arithmetico-geometric series.}

\subsubsection*{Estimates of $\gamma_\mathrm{A_w} \geq \|A_\mathrm{w}\|^2$ and $\gamma_\mathrm{A_u} \geq \|A_\mathrm{u}\|^2$} We construct $A_{\mathrm{w},i}$ and $A_{\mathrm{u},i}$ using $A_i$, $i=1,...,N_\mathrm{s}$, according to \eqref{eq:barphiw_redef} and \eqref{eq:barphiu_redef}, respectively. We have
\begin{align}
\gamma_\mathrm{A_w}=\max_{i=1,...,N_\mathrm{s}}\|A_{\mathrm{w},i}\|^2,\quad \gamma_\mathrm{A_u}&=\max_{i=1,...,N_\mathrm{s}}\|A_{\mathrm{u},i}\|^2.
\end{align}

% The estimate $\gamma_\mathrm{A_w}$ can be computed by solving the following SDP:
% \begin{align}
% \nonumber
% \underset{\gamma_\mathrm{A_w}}{\min}  &\quad \gamma_\mathrm{A_w} \\
% \text{s.t. }& \quad \gamma_\mathrm{A_w} \succeq \|A_{\mathrm{w},i}\|^2 ,\;i=1,...,N_\mathrm{s}.
% \end{align}
% A similar SDP may be used to compute $\gamma_\mathrm{A_u}$.
}{}

\end{document}